\documentclass[twocolumn]{autart}     

\pdfobjcompresslevel=0
\usepackage{graphicx}
\usepackage{subcaption}
\usepackage{amsmath}
\usepackage{amssymb}
\usepackage{accents}
\usepackage{cases}
\usepackage{tikz,pgfplots}
\pgfplotsset{compat=newest}
\usepackage{xcolor}
\usepackage{caption}
\usepackage{setspace}
\usepackage{booktabs}
\usepackage{array}
\usepackage{cite}
\usepackage{algorithm}
\usepackage{algorithmic}
\usepackage{natbib}
\usetikzlibrary{patterns}
\usetikzlibrary{graphs}
\usetikzlibrary{graphs.standard}

\newtheorem{proposition}{Proposition}
\newtheorem{theorem}{Theorem}
\newtheorem{definition}{Definition}
\newtheorem{lemma}{Lemma}
\newtheorem{corollary}{Corollary}
\newtheorem{remark}{Remark}
\newtheorem{assumption}{Assumption}

\newtheorem{example}{Example}

\newcommand{\revise}[1]{\textcolor{black}{#1}}
\newcommand{\modify}[1]{\textcolor{black}{#1}}

\DeclareMathOperator*{\1}{\textbf{1}} 
\DeclareMathOperator*{\cen}{Cen}

\DeclareMathOperator*{\Conv}{Conv}


\makeatletter
\pgfmathdeclarefunction{alpha}{1}{%
	\pgfmathint@{#1}%
	\edef\pgfmathresult{\pgffor@alpha{\pgfmathresult}}%
}

\begin{document}

\begin{frontmatter}
	
	\title{Resilient Multi-Dimensional Consensus in Adversarial Environment \thanksref{footnoteinfo}}               

	\thanks[footnoteinfo]{This work is supported by the National Key Research and Development Program of China under Grant 2018AAA0101601. (Corresponding author: Yilin Mo. )}
	
	\author[THU]{Jiaqi Yan}\ead{jiaqiyan@tsinghua.edu.cn},    
	\author[TONGJI]{Xiuxian Li}\ead{xli@tongji.edu.cn},               
	\author[THU]{Yilin Mo}\ead{ylmo@tsinghua.edu.cn}, 
	\author[NTU]{Changyun Wen}\ead{ecywen@ntu.edu.sg}  
	
	\address[THU]{Department of Automation and BNRist, Tsinghua University, Beijing, China}
	\address[TONGJI]{Department of Control Science and Engineering, College of Electronics and Information Engineering, Institute for Advanced Study, and Shanghai Research Institute for Intelligent Autonomous Systems, Tongji University, Shanghai, China}
	\address[NTU]{School of Electrical and Electronic Engineering, Nanyang Technological University, Singapore}  

	\begin{keyword}                           
	Resilient algorithms, Robust graphs, Multi-dimensional systems.          
	\end{keyword}                             

	\begin{abstract}                          
This paper considers the multi-dimensional consensus in networked systems, where some of the agents might be misbehaving (or faulty). Despite the influence of these misbehaviors, the benign agents aim to reach an agreement while avoiding being seriously influenced by the faulty ones. To this end, this paper first considers a general class of consensus algorithms, where each benign agent computes an ``auxiliary point" based on the received values and moves its state toward this point. Concerning this generic form, we present conditions for achieving resilient consensus and obtain a lower bound on the exponential convergence rate. Assuming that the number of malicious agents is upper bounded, two specific resilient consensus algorithms are further developed based on the obtained conditions. Particularly, the first solution, based on Helly's Theorem, achieves the consensus within the convex hull formed by the benign agents' initial states, where the auxiliary point can be efficiently computed through linear programming. On the other hand, the second algorithm serves as a ``built-in'' security guarantee for standard average consensus algorithms, in the sense that its performance coincides exactly with that of the standard ones in the absence of faulty nodes while also resisting the serious influence of the misbehaving ones in adversarial environment. Some numerical examples are provided in the end to verify the theoretical results.
	\end{abstract}
	
\end{frontmatter}

\section{Introduction}
The past decades have witnessed remarkable research interest in networked systems. One of its fundamental focuses would be the consensus problem, which has been widely investigated in various applications including formation control of mobile robots (\cite{ren2007information,raffard2004distributed,olfati2004consensus}), data fusion in sensor network (\cite{olfati2005consensus,moallemi2006consensus}), \textit{etc}. Given a set of autonomous agents (such as vehicles, sensors), it seeks a distributed protocol that the agents can utilize to reach a common decision/agreement on some value related to their initial states. 

In these years, considerable attention has been paid to the development of consensus algorithms (\cite{olfati2007consensus,ren2007information,wei2012distributed}). These protocols normally base on a hypothesis that every computing agent is trustworthy and cooperates to follow the algorithms throughout the execution. Nevertheless, as the scale of the network increases, it becomes difficult to secure every agent. On one hand, autonomous agents will communicate with each other to make control decisions. This opens the system to malicious attackers who might compromise the data on transmission channels. On the other hand, some agents may not be willing to follow the given rules if they weigh their private interests more than the public ones. It is reported that such misbehaving agents can either dictate the final consensus value, or prevent the network from reaching an agreement (\cite{sundaram2018distributed}). 

Given the wide applications of consensus algorithms in safety-critical systems, the problem of resilient consensus has been investigated in the literature over decades, aiming to achieve an agreement among benign agents while avoiding being significantly influenced by the network misbehaviors. \revise{To this end, some works are proposed in order to identify and isolate the illegal components. For example, \cite{pasqualetti2012consensus} characterize the resiliency of linear consensus
networks. Through a system-theoretic approach, the authors find that the network should be sufficiently connected for the faulty agents to be detected and identified. For the purpose of detection and isolation, one way is to construct a model-based
detection system by using observers ( \cite{silvestre2017stochastic,silvestre2013gossip,shames2011distributed,gallo2020distributed,yan2020attack}). Specifically, one could reconstruct the
process states and make a decision on possible misbehaviors based on the residuals generated by the observers. Alternatively, with lower computational cost, \cite{ramos2021discretetime,ramos2020general} propose resilient consensus algorithms which are capable of detecting and isolating the misbehaving agents within polynomial times.
} 

\revise{Instead of detecting and excluding the faulty agents, another line of research focuses on developing resilient consensus algorithms which guarantee an acceptable system performance even in the presence of false data.} Most approaches adopt the idea of simply ignoring the suspicious values. For example, \cite{dolev1986reaching} consider the consensus problem in a complete network, where an approximate agreement is desired in the presence of misbehaving agents. In order to overrule the effects of malicious nodes, a secure updating strategy is proposed. The essential idea is that each normal agent discards the most extreme values in its neighborhood and updates the state based on the remaining ones at any time. Such protocol has then inspired a family of algorithms, namely Mean-Subsequence Reduced (MSR) algorithm (\cite{kieckhafer1994reaching,vaidya2012iterative,de1998multistep}). 
\cite{leblanc2013resilient} modify MSR and present a Weighted Mean-Subsequence-Reduced (W-MSR) algorithm. Different from that in MSR, a normal agent only removes the extreme values that are strictly larger or smaller than its own. This mechanism results in its own state being kept at each time and turns out to keep more useful information than MSR. Furthermore, instead of the complete graphs, they analyze W-MSR in more general topologies. A novel property named network robustness is introduced therein, which characterizes the resiliency of W-MSR in terms of graph structure. \revise{Following the similar research line, \cite{shang2019resilient} proposes a private-expressed opinion formation model, which facilitates the benign agents to achieve a resilient agreement while protecting their private opinions from being eavesdropped by the outliers. \cite{senejohnny2019resilience} further extend MSR to the context where each agent has independent clocks and possibly makes updates at arbitrary instants. This work particularly sheds light on developing resilient event-triggered and self-triggered control strategies. Later, \cite{dibaji2015consensus} generalize the above results to second-order systems. }

The MSR-based strategies ensure the resilient consensus in uni-dimensional systems where agents' states are assumed to be scalars. Under certain conditions, the final agreement is guaranteed to be within the interval limited by the minimum and maximum values of normal agents' initial states. The implication on scalar variables, however, produces crucial limitations in various practical applications such as the vehicle formation control on a 2D-plane. \revise{Inspired by this fact, recent research attention has been paid to extending the MSR algorithm to the more general $d$-dimensional spaces, where $d\geq 1$ (e.g., \cite{vaidya2013byzantine,vaidya2014iterative,yan2019safe,mendes2013multidimensional,wang2019resilient,shang2020resilient,shang2021median,leblanc2017resilient}). Specifically, various algorithms are developed in literature to facilitate the healthy agents to achieve an agreement within the convex hull formed by
their initial states. To this end, at each iteration, every benign agent seeks a resilient convex combination, referring to a point within the convex hull formed by its benign neighboring states. Existing works find this point through calculating either the Tverberg points (\hspace{1pt}\cite{vaidya2013byzantine,vaidya2014iterative}) or the intersection of multiple convex sets (\hspace{1pt}\cite{yan2019safe,mendes2013multidimensional,wang2019resilient}). While the results therein are elegant, the calculation is rather costly (\hspace{1pt}\cite{agarwal2008algorithms}) and these works, unfortunately, do not provide an efficient way to do it. This leads to a major concern in applying the existing algorithms. Moreover, notice that most literature studies one or more specific algorithms, while a unified resiliency analysis for the general class of consensus algorithms is still absent.}

This work is inspired by the issues above. In what follows, we summarize our main contents and contributions:

1) To achieve greater generality, we first propose a unified framework integrating most commonly used consensus algorithms, where each normal agent computes an ``auxiliary point" based on its received states and moves its state towards this point at every iteration. This paper conducts a resiliency analysis over this large class of distributed protocols. We provide verifiable conditions for the achievement of resilient consensus and obtain the lower bound on its exponential convergence rate. Leveraging such conditions enables us to characterize and design a specific resilient algorithm in the presence of misbehaving agents.

2) Following the proposed generic framework, two algorithms are further developed in this work. To be specific, the first one facilitates the resilient consensus within the convex hull formed by benign agents' initial states. \modify{We also present an explicit approach for computing the auxiliary point using linear programming. Compared with the existing works (\cite{vaidya2014iterative,mendes2013multidimensional,yan2019safe}), the proposed algorithm yields lower computational cost which will not increase with the size of the in-neighborhood of any agent.} The second algorithm, by using a switching rule, complements Algorithm~\ref{alg:resilient} and acts as a ``built-in'' security guarantee for standard average consensus algorithms. Specifically, the performance of Algorithm~\ref{alg:switch} agrees with that of the standard ones in the absence of faulty nodes while providing additional resiliency when faulty ones exist.

3) Since the idea behind canonical consensus serves as a fundamental principle in many distributed coordination settings, the methods provided in this work offer powerful tools in tackling the faulty components for a large number of consensus-based distributed algorithms.

%


A preliminary version of Algorithm~\ref{alg:resilient} has been reported in our previous work \cite{yan2019safe}, where most of the proofs are missing. \revise{In specific, the previous work requires that each normal agent calculates a safe kernel, which is the intersection of a group of convex sets. After that, it computes the vertices of this kernel for updates. Therefore, the major concern of applying it is the high computational complexity. Moreover, the approach may fail in a complex network where some agents have abundant neighbors. In contrast, Algorithm~\ref{alg:resilient} is of more lightweight since, instead of the whole safe kernel, only $2d$ points in it are going to be calculated. As proved, this step can be achieved through a linear programming with a lower computational cost that is independent of the network complexity. Notice that by doing so, the problem of vertex enumeration, which is also computationally prohibitive (\hspace{1pt}\cite{avis1992pivoting}), is avoided.} 

The rest of this paper is organized as follows. After introducing the preliminaries and notations in Section~\ref{sec:pre},  Section~\ref{sec:form} formulates the problem of interest. To achieve the resilient consensus in multi-dimensional spaces,
a general  ``auxiliary point'' based updating framework is detailed in Section~\ref{sec:general}, which incorporates most of the existing consensus algorithms. Sections~\ref{sec:alg} and \ref{sec:algo2} provide specific resilient strategies, the efficiency of which is also analyzed therein. We finally test the main results through numerical examples in Section~\ref{sec:casestudy} and conclude the paper in Section~\ref{sec:conclusion} .

\section{Preliminaries}\label{sec:pre}
We start by introducing the preliminaries and notations.

\subsection{Graph theory}
Let us consider a \revise{digraph} $\mathcal{G}=\{\mathcal{V},\mathcal{E}\}$, where $\mathcal{V}$ is the set of agents, and $\mathcal{E}\subset \mathcal{V}\times\mathcal{V}$ is the set of edges. The edge $e_{ij}\in \mathcal{E}$ indicates that agent $i$ can directly receive information from agent $j$. The sets of in-neighbors and out-neighbors of agent $i$ are respectively defined as $$\revise{\mathcal{N}_i^+\triangleq\{j\in \mathcal{V}|e_{ij}\in \mathcal{E}\}, \;\mathcal{N}_i^-\triangleq\{j\in \mathcal{V}|e_{ji}\in \mathcal{E}\}.}$$

As one might imagine, there is a close coupling between network topology and the maximum number of tolerable faulty agents. For the resilient algorithms discussed in this paper, we characterize their security and efficiency in terms of network robustness. It was first introduced in \cite{leblanc2013resilient} and is formally defined below:
\begin{definition}($r$-robustness):
	A \revise{digraph} $\mathcal{G}=\{\mathcal{V},\mathcal{E}\}$ is $r$-robust, if for any pair of disjoint and nonempty subsets $\mathcal{V}_1, \mathcal{V}_2\subsetneq\mathcal{V}$, at least one of the following statements hold:
	\begin{enumerate}
		\item There exists an agent in $\mathcal{V}_1$ such that it has at least $r$ in-neighbors outside $\mathcal{V}_1$;
		\item There exists an agent in $\mathcal{V}_2$ such that it has at least $r$ in-neighbors outside $\mathcal{V}_2$.
	\end{enumerate}
\end{definition}
\begin{definition}(($r,s$)-robustness):
	A \revise{digraph} $\mathcal{G}=\{\mathcal{V},\mathcal{E}\}$ is ($r,s$)-robust, if for any pair of disjoint and nonempty subsets $\mathcal{V}_1, \mathcal{V}_2\subsetneq\mathcal{V}$, at least one of the following statements holds:
	\begin{enumerate}
		\item Any agent in $\mathcal{V}_1$ has at least $r$ in-neighbors outside $\mathcal{V}_1$;
		\item Any agent in $\mathcal{V}_2$ has at least $r$ in-neighbors outside $\mathcal{V}_2$; 
		\item There are no less than $s$ agents in $\mathcal{V}_1\cup\mathcal{V}_2$ such that each of them has at least $r$ in-neighbors outside the set it belongs to ($\mathcal{V}_1$ or $\mathcal{V}_2$).
	\end{enumerate}
\end{definition}
Intuitively, network robustness is a connectivity measure for graphs. It claims that for any two disjoint and nonempty subsets of agents, there are ``many'' agents within these sets that have a sufficient number of in-neighbors from outsides. 

\revise{An efficient way for constructing the network robustness is given in \cite{leblanc2013resilient}, where the authors propose how to construct a robust digraph from an existing one. On the other hand, \cite{leblanc2013algorithms} introduce several algorithms, which can be either centralized or distributed, to check the robustness of a given digraph. Alternatively, \cite{usevitch2020determining} also presents a method for determining the robustness of digraph using mixed integer linear programming.
Since the construction and verification of robust networks are not the main focus of this paper, we refer interested readers to these works for more details. }

\subsection{Attack model}\label{sec:attack}
To model the behaviors of misbehaving/faulty agents, let us denote by $\mathcal{F}$ the set of such nodes. Any agent $i\in \mathcal{F}$ could either be the adversarial one with the value being manipulated by the attacker,  or the non-participant agent who does not follow the predefined updating rule. On the other hand, $\mathcal{B}$ is the collection of benign agents who always obey the given law to compute the desired function. It is clear that $\mathcal{B}\cap\mathcal{F}=\varnothing$ and $\mathcal{B}\cup\mathcal{F}=\mathcal{V}$.

This paper considers Byzantine attack model (\cite{leblanc2013resilient}), where no restrictions are imposed on the transmitted data of faulty agent $i\in\mathcal{F}$. Namely, both the adversarial and non-participant agents are allowed to send out arbitrary and even different data to different neighbors. Furthermore, the faulty agents could also collude among themselves to decide on the deceptive values to be communicated. The threat can be characterized by its scope:
\begin{enumerate}
	\item ($F$-total attack model) There are at most $F$ misbehaving agents in the network. That is, $|\mathcal{F}|\leq  F$.
	\item ($F$-local attack model) There are at most $F$ misbehaving agents in the in-neighborhood of any agent. That is, $|\mathcal{F}\cap\mathcal{N}_i^+|\leq F$, for any agent $i\in\mathcal{V}$.
\end{enumerate}




\revise{
\begin{remark}
The parameter $F$, which is an upper bound of the number of faulty agents, is assumed to be known. It might be determined by the \textit{a priori} knowledge on the vulnerability of each agent or the quality of each sensor. For instance, the nodes with lower security investment are reasonably assigned with high vulnerability (\hspace{0.5pt}\cite{abbas2017improving}). Alternatively, the quantity $F$ may also be viewed as a design parameter, indicating how many bad agents the network can and is willing to tolerate. As will be proved, 
choosing a larger $F$ increases the resiliency of network but also yields higher computation complexity. Moreover, \cite{shang2020resilient} also introduce a max-consensus process to estimate $F$. 
We should note that in the rich literature dealing with malicious sensors/agents, e.g., \cite{abdelhakim2013distributed,vamvoudakis2014detection,leblanc2013resilient,vaidya2013byzantine}, it is also assumed that the number of faulty nodes is known to the system operator.
\end{remark}}

\subsection{Notations}\label{sec:notation}

For any vector  $x\in\mathbb{R}^d$, $x_\ell$ is defined as the $\ell$-th entry of it. 
Given a finite set $\mathcal{X}\subset \mathbb{R}^d$, let us denote
\begin{equation}\label{eqn:def_widehat}
\begin{split}
&\revise{m_\ell(\mathcal{X})}  \triangleq \min_{x\in\mathcal{X}} x_\ell, \\
&\revise{M_\ell(\mathcal{X})}  \triangleq \max_{x\in\mathcal{X}} x_\ell,
\end{split}
\end{equation}
and
\begin{equation}
\begin{split}
&\mathcal{H}_\mathcal{X} \triangleq \prod_{\ell\in\{1,...,d\}} [m_\ell(\mathcal{X}), M_\ell(\mathcal{X})].
\end{split}
\end{equation}
Namely, $m_\ell(\mathcal{X})$ and $M_\ell(\mathcal{X})$ are respectively the minimum and maximum value among the $\ell$-th entries of points in $\mathcal{X}$, and $\mathcal{H}_\mathcal{X}$ is the hypercube defined through Cartesian product. We use $\cen(\mathcal{X})$ to denote the center of $\mathcal{H}_\mathcal{X}$.
 
Given any finite set $\mathcal{X}$, we denote by $\Conv(\mathcal{X})$ the convex hull formed by $\mathcal{X}$, namely the set of all convex combinations of the points in $\mathcal{X}$.

\begin{nomenclature}
	\vspace{-15pt}
	\begin{deflist}[AAAAA] 
		\defitem{$x_\ell$}\defterm{the $\ell$-th entry of $x\in\mathbb{R}^d$}
		\vspace{-10pt}
		\defitem{$\Conv(\mathcal{X})$}\defterm{convex hull formed by the point set $\mathcal{X}$}
		\vspace{-10pt}
		\defitem{$m_\ell(\mathcal{X})$}\defterm{minimum value among the $\ell$-th entries of points in $\mathcal{X}$}
		\vspace{-10pt}
		\defitem{$M_\ell(\mathcal{X})$}\defterm{maximum value among the $\ell$-th entries of points in $\mathcal{X}$}
		\vspace{-10pt}
		\defitem{$\mathcal{H}_\mathcal{X}$}\defterm{$d$-dimensional hypercube defined by Cartesian product of $[m_\ell(\mathcal{X}),M_\ell(\mathcal{X})]$, where $\ell\in\{1,\cdots,d\}$}
		\vspace{-10pt}
		\defitem{$\cen(\mathcal{X})$}\defterm{the center of $\mathcal{H}_\mathcal{X}$}
		\vspace{-10pt}
		\defitem{$m^\mathcal B_\ell(k)$}\defterm{minimum value among the $\ell$-th entries of benign agents' states at time $k$}
		\vspace{-10pt}
		\defitem{$M^\mathcal B_\ell(k)$}\defterm{maximum value among the $\ell$-th entries of benign agents' states at time $k$}
		\vspace{-10pt}
		\defitem{$\Delta_\ell(k)$}\defterm{the distance between $m^\mathcal B_\ell(k)$ and $M^\mathcal B_\ell(k)$}
		\vspace{-10pt}
		\defitem{$\Xi(k)$}\defterm{convex hull formed by the states of benign agents at time $k$}
		\vspace{-10pt}
		\defitem{$\mathbb{B}_\delta$}\defterm{the $d$-dimensional ball centered at origin with radius $\delta$.}
	\end{deflist}
\end{nomenclature}

\section{Problem Formulation}\label{sec:form}
In this paper, we consider a group of $N$ agents cooperating over a \revise{digraph} $\mathcal{G}=\{\mathcal{V},\mathcal{E}\}$. At any time $k\geq 0$, let $x^i(k)\in\mathbb{R}^d$ denote the current state of agent $i$. The agents are said to reach a consensus if there exists a constant $\bar x$ such that $\lim_{k\rightarrow \infty}x^i(k)=\bar x$ holds for every agent $i$. In particular, if $\bar x=1/N\sum_{i=1}^{N}x^i(0)$, the average consensus is achieved. 

As many practical applications fit into the framework of average consensus, much research effort has been devoted to this field. In the distributed framework, each agent updates according to a generic rule as follows: 
\begin{equation}\label{eqn:update_benign}
x^i(k+1) = \alpha^i(k) x^i(k) + (1-\alpha^i(k)) \tilde{x}^i(k)+ \varepsilon^i(k),
\end{equation} 
where $\tilde{x}^i(k)$ is termed as an \textit{``auxiliary point''} throughout this paper, and is calculated by a certain function of agent $i$'s in-neighboring states. Moreover, $\varepsilon^i(k)$ is termed as a `` residual''. With respect to different ways of calculating $\tilde{x}^i(k)$ and $\varepsilon^i(k)$, various algorithms are developed by the research community to solve the consensus problem, see \cite{ren2007information,raffard2004distributed,olfati2004consensus,olfati2005consensus,moallemi2006consensus} for examples.

It is worth noticing that, the standard approaches implicitly assume that all agents are reliable throughout the execution and cooperate to achieve the desired value. However, as the number of local agents increases, certain concerns arise that make this assumption to be violated. As discussed before, the strong dependence of distributed algorithms on the communication infrastructures creates lots of vulnerabilities for cyber attacks, where the transmitted information might be manipulated by external adversaries. Additionally, a ``non-participant'' agent may exist, that deviates from the normal update rule and sends out self-designed information to its neighbors for its own benefits. Clearly, such misbehaviors would degrade the performance of consensus protocols: they can either prevent the benign agents from reaching a consensus or manipulate the final agreement to be false. In fact, as shown in \cite{sundaram2018distributed}, a single ``stubborn'' agent can cause all agents to agree on an arbitrary value by simply keeping this value constant. 

These security concerns lead to the study of resilient consensus protocols. In this paper, we intend to present a secure strategy to achieve the agreement among healthy agents while raising its resiliency so as to avoid being seriously influenced by the network misbehavior. 

For this purpose, at any time $k$, let us denote by $\Omega(k)$ the set of benign agents' states. We further define
\begin{equation}\label{eqn:H_i}
	\Xi(k) \triangleq \Conv(\Omega(k)),
\end{equation}
which is the convex hull formed by the states of benign agents.
This paper aims to achieve the objectives below, regardless of the initial states and even in the adversarial environment:
\begin{definition}\label{def:resilientCon}\rm{\textbf{[Resilient consensus]}}
The benign agents are said to achieve the resilient consensus if the following conditions are satisfied:
\begin{enumerate}
	\item\emph{Agreement:} As $k$ goes to infinity, it holds for any benign agent $i$ that $\lim_{k\to \infty}x^i(k) = \bar{x}$ for some $\bar{x}\in\mathbb{R}^d$; 
	\item\emph{$\delta$-Validity:} At any time, the states of benign agents remain in the set $\Xi(0) +\revise{\mathbb{B}_\delta},$
	where \revise{$\mathbb{B}_\delta$} is a $d$-dimensional ball centered at origin with radius $\delta<\infty$.
\end{enumerate}
\end{definition}

We elucidate these conditions below. First, the states of the benign agents should converge to the same constant value even in the presence of misbehaving ones. In addition, $\delta$-validity condition guarantees that the influence caused by the misbehaviors will be bounded throughout the execution. Hence, the resilient consensus algorithms protect the benign agents from being significantly misled by the faulty ones. \revise{In this condition, $\delta$ is a design parameter that indicates the resiliency level that the system is willing to introduce. More specifically, choosing $\delta=0$ means the system wants to achieve high resiliency by seeking a decision vector exactly from $\Xi(0)$. On the other hand, increasing $\delta$ means the system is willing to tolerate more performance loss when the network misbehaviors exist. As will be seen in Remark~\ref{rmk:delta}, this may provide more freedom in the algorithm design.}

In this paper, besides discussing the general framework \eqref{eqn:update_benign} and obtaining the sufficient conditions for it to achieve resilient consensus, we would also provide two specific algorithms to enforce such conditions.

\section{Sufficient Conditions on Achieving Resilient Consensus}\label{sec:general}
Clearly, \eqref{eqn:update_benign} represents a general form of distributed consensus algorithms. Therefore, this section will leverage this form and establish sufficient conditions for this framework to facilitate the resilient consensus. These conditions are helpful for us to characterize and design specific resilient algorithms in the presence of misbehaving agents.

At any $\ell\in \mathcal{D}\triangleq \{1,2,\cdots,d\}$, let us respectively denote by $m^\mathcal B_\ell(k)$ and $M^\mathcal B_\ell(k)$ the minimum and maximum value among the $\ell$-th entries of benign agents' states at time $k$. That is,
\begin{equation}\label{eqn:M(k),m(k)}
\begin{split}
&m^\mathcal B_\ell(k) \triangleq m_\ell(\Omega(k)) =\min_{i\in\mathcal{B}} x^i_\ell(k),\\
&M^\mathcal B_\ell(k) \triangleq M_\ell(\Omega(k)) =\max_{i\in\mathcal{B}} x^i_\ell(k),
\end{split}
\end{equation}
where $m_\ell(\Omega(k))$ and $M_\ell(\Omega(k))$ are defined in \eqref{eqn:def_widehat}. The temporal difference is thereby defined as 
\begin{equation}\label{eqn:difference}
\Delta_\ell(k)=M^\mathcal B_\ell(k)-m^\mathcal B_\ell(k).
\end{equation}
For convenience, the following definitions are made for any $\epsilon\in\mathbb{R}$:
\begin{equation}\label{eqn:V^MandV^m}
\begin{split}
\mathcal{V}_\ell^m(k,\epsilon)\triangleq\{i\in\mathcal{V}: x^i_\ell(k)<\epsilon\},\\
\mathcal{V}_\ell^M(k,\epsilon)\triangleq\{i\in\mathcal{V}: x^i_\ell(k)>\epsilon\}.
\end{split}
\end{equation}
From the definitions above, $\mathcal{V}_\ell^m(k,\epsilon)$ [resp. $\mathcal{V}_\ell^M(k,\epsilon)$] includes all agents, the $\ell$-th component of whose state is lower [resp. greater] than $\epsilon$ at time $k$. We then define 
\begin{equation}\label{eqn:B^MandB^m}
\begin{split}
\mathcal{B}_\ell^m(k,\epsilon)&\triangleq\mathcal{V}_\ell^m(k,\epsilon)\cap\mathcal{B},\\
\mathcal{B}_\ell^M(k,\epsilon)&\triangleq\mathcal{V}_\ell^M(k,\epsilon)\cap\mathcal{B},
\end{split}
\end{equation}
which consist of benign agents in $\mathcal{V}_\ell^m(k,\epsilon)$ and $\mathcal{V}_\ell^M(k,\epsilon)$, respectively.

With the above preparations, we claim that, the below conditions are sufficient for \eqref{eqn:update_benign} to guarantee the resilient consensus, the proof of which is given in Appendix~\ref{app:A}:
\begin{theorem}\label{thm:resilientCon}
	Consider the network $\mathcal{G}=(\mathcal{V}, \mathcal{E})$. Suppose that each benign agent $i\in\mathcal{B}$ updates with rule \eqref{eqn:update_benign}, and the following conditions hold for any $k\in\mathbb{Z}_{\geq 0}$ and $\ell \in \mathcal{D}$:
\begin{enumerate}
	\item[\rm{C1}:]{\rm (Weight)} There exists $0<\alpha<1$ such that $\alpha^i(k) \geq \alpha$ and  $1-\alpha^i(k) \geq \alpha$;
	\item[\rm C2:]{\rm (Residual)} The sequence $\{\varepsilon^i(k)\}$ satisfies that
	\begin{equation}\label{eqn:varepsilon}
	\begin{split}
	\sum_{k=0}^{\infty}||\varepsilon^i(k)||<\infty;
	\end{split}
	\end{equation} 
	\item[\rm C3:]{\rm (Auxiliary point)} There exists a finite set $\Lambda^i(k)\subset \Xi(k)$, such that the auxiliary point is chosen from the center of $\mathcal{H}_{\Lambda^i(k)}$. Namely, $\tilde{x}^i(k)=\cen(\Lambda^i(k))$;
	\item[\rm C4:]{\rm (Topology)}  Given any $\epsilon_1, \epsilon_2\in\mathbb{R}$, if $\mathcal{B}_\ell^M(k,\epsilon_1)$ and $\mathcal{B}_\ell^m(k,\epsilon_2)$ are disjoint and nonempty, there exists a benign agent, labeled as $j$, such that at least one of the following statements holds:
	\begin{enumerate}
	\item[\rm 4a)]  $j\in \mathcal{B}_\ell^M(k,\epsilon_1)$ and there exists a point in $\Lambda^j(k)$, the $\ell$-th entry of which is upper bounded by $\epsilon_1$;
	\item[\rm 4b)] $j\in \mathcal{B}_\ell^m(k,\epsilon_2)$ and there exists a point in $\Lambda^j(k)$, the $\ell$-th entry of which is lower bounded by $\epsilon_2$.
	\end{enumerate}
\end{enumerate}
Then the following statements hold, regardless of
the actions of misbehaving agents:

(1) Benign nodes exponentially reach resilient consensus;

(2) It follows for any $k\in\mathbb{Z}_{\geq 0}$ and any $\ell\in\mathcal{D}$ that:
	\begin{equation}\label{eqn:delta}
		\Delta_\ell(k+|\mathcal{B}|)\leq \bigg(1-\frac{(0.5\alpha)^{|\mathcal{B}|}}{2}\bigg)\Delta_\ell(k)+2\sum_{\tau=k}^{k+|\mathcal{B}|-1}\delta(\tau), 
	\end{equation}
	where 
	\begin{equation}\label{eqn:deltadef}
		\delta(k) \triangleq \max_{i\in\mathcal{B}}||\varepsilon^i(k)||.
	\end{equation}

\end{theorem}


We present a few remarks regarding the listed conditions. First, condition C3 prevents the misbehaving ones from taking arbitrary control over the dynamics of benign ones. To see this, we note that the set $\Lambda^i(k)$ is ``safe", since it is contained in $\Xi(k)$, which is formed by only benign states. Therefore, $\tilde{x}^i(k)$ is safely chosen. C1 and C2 are imposed for the $\delta$-validity condition, where the former condition guarantees that any benign agent always uses the convex combination of some safe points in updating, and the latter ensures the bias to such a safe combination is accumulatively bounded. \revise{Moreover, the lower bound on update weights, i.e., $\alpha$ in C1, guarantees that any benign agent $i$ could be sufficiently influenced by the ``safe state" $\tilde{x}^i(k)$. Notice that similar conditions are widely adopted in both standard consensus algorithms (\hspace{1pt} \cite{nedic2010constrained,matveev2021diffusion,lorenz2010conditions}) and resilient consensus algorithms (\hspace{1pt}\cite{leblanc2013resilient,sundaram2018distributed,yan2019safe}).}
\revise{Finally, C4 plays a crucial role in guaranteeing the agreement condition. Intuitively, it claims that, if the agreement among benign states has not been reached at the $\ell$-th entry, i.e., there exist some benign states, the $\ell$-th entry of which is smaller [resp.  larger] than others, then one of them has some benign in-neighbors with larger [resp.  smaller] $\ell$-th entries so that it can move towards the consensus. Therefore, C4 is associated with the network topology. In Section~\ref{sec:alg}, we would present topological conditions to enforce this statement. }

As a result of Theorem~\ref{thm:resilientCon}, the resilient algorithm protects the states of benign agents from being driven to arbitrary values. Hence, the network could withstand the compromise of partial agents. Furthermore, since the convergence holds regardless of the actions of misbehaving agents, it works effectively even in the worst-case scenario, where the faulty agents could be Byzantine that send arbitrary and different information to different out-neighbors, and could also have full knowledge of graph topology, updating rules, \textit{etc.}

Notice that Theorem~\ref{thm:resilientCon} provides verifiable conditions for checking the resiliency of any consensus algorithm which is in the form \eqref{eqn:update_benign}. We thus can leverage such results to design specific resilient algorithms in adversarial environment.



\section{Resilient Consensus Algorithm 1}\label{sec:alg}
In this section, we shall provide a specific algorithm to solve the problem of resilient consensus under the attack model discussed in Section~\ref{sec:attack}. By ensuring the sufficient conditions in Theorem \ref{thm:resilientCon}, we show that the proposed algorithm guarantees the agreement exactly within the convex hull formed by the initial states of benign agents, namely, achieves the resilient consensus with $\delta=0$.

Throughout this section, we impose the following assumption on network topology:
\begin{assumption}\label{assump:connect}
	For any $i\in\mathcal{B}$, it holds that $|\mathcal N^+_i|\geq (d+1)F+1$.
\end{assumption}
To simplify notations, the following definitions are given beforehand:
\begin{definition}\label{def: S}
	Consider a set\footnote{To be more precise, $\mathcal A$ should be defined as a multi-set since we allow duplicate elements in the set, e.g., the states of $m$ agents shall be counted as $m$ points even if some of them may be identical.} $\mathcal{A}\subset \mathbb R^d$ with cardinality $m$. For some $n\in\mathbb{Z}_{\geq 0}$ and $n \leq m$, let $\mathcal{S}(\mathcal{A},n)$ be the set of all its subsets with cardinality $m-n$. 
\end{definition}

It is clear that the set $\mathcal{S}(\mathcal{A},n)$ contains $\binom{m}{n}$ elements, and each of them is associated with a convex hull. The intersection of all these convex hulls plays a crucial role in our algorithm, which is formally defined below: 
\begin{definition}\label{def: intersection}
	Consider a set $\mathcal{A}\subset\mathbb R^d$ with cardinality $m$. For some $n\in\mathbb{Z}_{\geq 0}$ and $n \leq m$, we define $\varPsi (\mathcal{A}, n)$ as 
	\begin{align}\label{eqn:varPsi}
	\varPsi(\mathcal A,n) \triangleq \bigcap_{S\in \mathcal S(\mathcal A,n)} \Conv(S).
	\end{align}
\end{definition}
In view of Definition \ref{def: intersection}, $\varPsi(\mathcal A,n)$ is a subset of the convex hull formed by any $m-n$ points in $\mathcal A$.

Now we are ready to present the resilient algorithm, where each normal agent $i\in \mathcal{B}$ starts with an initial state $x^i(0)\in \mathbb{R}^d$. At any instant $k> 0$, it makes updates as outlined in Algorithm \ref{alg:resilient}.
\begin{algorithm}[h!]
	1:\: Receive the states from all in-neighbors $j\in\mathcal{N}_i^+$, and collect these values in $\mathcal{X}^i(k)$.\\
	2:\: \textbf{for} $p\in\{1,2,...,d\}$ \textbf{do}
		\qquad \begin{enumerate}
		\item[] a): Sort in an ascending order the points in $ \mathcal{X}^i(k)$ based on their $p$-th entries.
		\item[] b): According to the sorted points, pick up the first $(d+1)F+1$ ones and collect them in $\mathcal{Y}^i(p,k)$. Calculate \textit{any} point $y^i(p,k)$, such that $y^i(p,k) \in \varPsi (\mathcal{Y}^i(p,k), F)$.
		\item[] c): Similarly, pick up the last $(d+1)F+1$ ones from the sorted points and collect them in $\mathcal{Z}^i(p,k)$. Compute any point $z^i(p,k) \in \varPsi (\mathcal{Z}^i(p,k), F)$.
	\end{enumerate}
	\quad \textbf{end for}\\
	3:\: Define a point set as
	\begin{equation}\label{eqn:Lambda^i}
	\Lambda^i(k) \triangleq \{y^i(p,k), 1\leq p\leq d\}\cup \{z^i(p,k), 1\leq p\leq d\}.
	\end{equation}
	Agent $i$ chooses the auxiliary point from the center of $\mathcal{H}_{\Lambda^i(k)}$, namely,
		\begin{align}\label{eqn:auxiliary1}
	\tilde{x}^i(k) = \cen(\Lambda^i(k)).
	\end{align}
	It then updates the local state as:
	\begin{equation}\label{eqn:updatewithattack}
	\begin{split}
	x^i(k+1) = \frac{x^i(k)+\tilde{x}^i(k)}{2}.
	\end{split}
	\end{equation} 

	4:\: Transmit the new state $x^i(k+1)$ to out-neighbors $j\in\mathcal{N}_i^-$. 
	\caption{Resilient consensus algorithm $1$ for benign agent $i$}
	\label{alg:resilient}
\end{algorithm}

\revise{The information flow of Algorithm~\ref{alg:resilient} is presented in Fig.~\ref{fig:diag}.} At each time $k$, the normal agent
sorts the received states from every dimension. We would prove later that both $y^i(p,k)$ and $z^i(p,k)$ are ``safe'', as they belong to the convex hull formed by benign states. Calculating either of them requires a subset of received states, which contains exactly $(d+1)F+1$ points in $\mathcal{X}^i(k)$. By collecting $\{y^i(p,k)\}$ and $\{z^i(p,k)\}$ in $\Lambda^i(k)$, the benign agent chooses the auxiliary point from the center of $\mathcal{H}_{\Lambda^i(k)}$ and makes updates. Finally, as every fault-free agent is only required to access the information in its in-neighborhood, Algorithm \ref{alg:resilient} can be implemented in a fully decentralized manner.

\begin{figure*}[!htbp]
	\centering
	\includegraphics[width=0.85\textwidth]{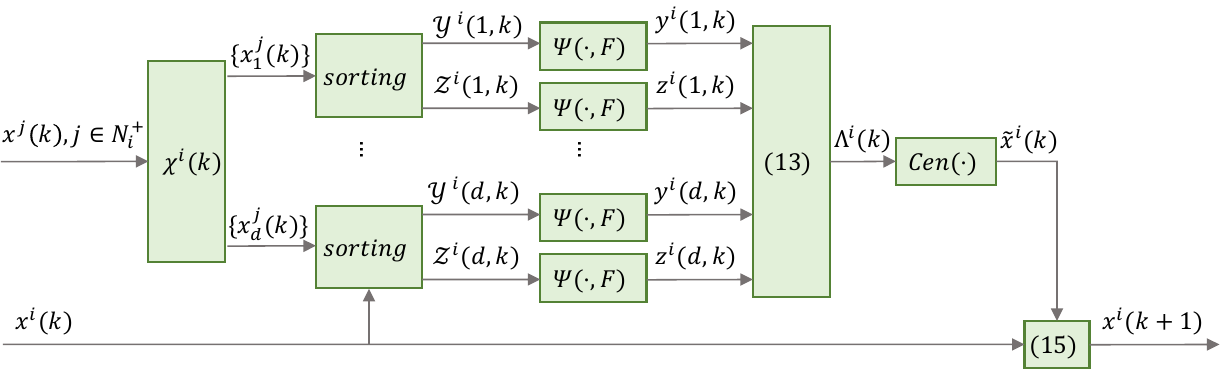}
	\caption{\revise{The schematic diagram of the information flow for Algorithm~\ref{alg:resilient}.}}
	\label{fig:diag}
\end{figure*}

\subsection{Computation of $y^i(p,k)$ and $z^i(p,k)$}\label{sec:middle_points}
Before showing the resiliency of Algorithm~\ref{alg:resilient}, we prove the existence of $y^i(p,k)$ and $z^i(p,k)$ for any $p\in\mathcal{D}$. To this end, it is helpful to introduce Helly's Theorem, which is a key supporting technique of this section:\\

\noindent\textbf{Helly's Theorem (\cite{Danzer1963Helly}). }
\textit{Let $X_1,\cdots, X_q$ be a finite collection of convex subsets in $\mathbb{R}^d$, with $q> d$. If the intersection of every $d + 1$ of these sets is nonempty, then the whole collection has a nonempty intersection. That is,
	$$\bigcap _{j=1}^{q}X_{j}\neq \varnothing. $$}

Below is an immediate result of Helly's Theorem:
\begin{corollary}\label{lm:existence}
Consider a set $\mathcal{A}\subset\mathbb R^d$ with cardinality $m$. For any  $n\in\mathbb{Z}_{\geq 0}$, if $m\geq n(d+1)+1$, then it holds that $$\varPsi (\mathcal{A}, n) \neq \varnothing.$$
\end{corollary}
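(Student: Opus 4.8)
The key insight is that $\varPsi(\mathcal{A}, n)$ is by definition an intersection of convex hulls, so Helly's Theorem is the natural tool. Recall that $\varPsi(\mathcal{A},n) = \bigcap_{S\in\mathcal{S}(\mathcal{A},n)}\Conv(S)$, where $\mathcal{S}(\mathcal{A},n)$ is the collection of all size-$(m-n)$ subsets of $\mathcal{A}$, and each $\Conv(S)$ is a convex set in $\mathbb{R}^d$. By Helly's Theorem, to establish that this whole intersection is nonempty, it suffices to show that the intersection of \emph{every} $d+1$ of these convex hulls is nonempty. So the plan is to reduce the global nonemptiness claim to a local claim about any $d+1$ of the sets $\Conv(S)$.

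The first step is therefore to fix an arbitrary choice of $d+1$ subsets $S_1,\dots,S_{d+1}\in\mathcal{S}(\mathcal{A},n)$ and show $\bigcap_{j=1}^{d+1}\Conv(S_j)\neq\varnothing$. The natural way to exhibit a common point is to show that the underlying subsets themselves have a large common intersection: if $\bigcap_{j=1}^{d+1} S_j$ is nonempty as a set of points, then any point in it lies in each $\Conv(S_j)$, giving a common point. So I would count how many points of $\mathcal{A}$ can survive the removal process. Each $S_j$ is obtained by deleting exactly $n$ points from $\mathcal{A}$ (since $|S_j| = m-n$). Forming the intersection $\bigcap_{j=1}^{d+1}S_j$ deletes at most $n(d+1)$ points in total (the union of the $d+1$ discarded sets, each of size $n$). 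Hence
\begin{equation}
\Bigl|\bigcap_{j=1}^{d+1} S_j\Bigr| \geq m - n(d+1).
\end{equation}

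The second step invokes the hypothesis: since $m \geq n(d+1)+1$, the right-hand side is at least $1$, so $\bigcap_{j=1}^{d+1}S_j$ contains at least one point $a\in\mathcal{A}$. Because $a\in S_j$ for every $j$, we have $a\in\Conv(S_j)$ for every $j$, and therefore $a\in\bigcap_{j=1}^{d+1}\Conv(S_j)$, proving this intersection is nonempty. Since the choice of the $d+1$ subsets was arbitrary, the hypothesis of Helly's Theorem is verified, and the conclusion $\varPsi(\mathcal{A},n)=\bigcap_{S\in\mathcal{S}(\mathcal{A},n)}\Conv(S)\neq\varnothing$ follows immediately. (One should also note the degenerate regime where $|\mathcal{S}(\mathcal{A},n)|\leq d$, i.e.\ the total number of convex hulls is at most $d$; there Helly's hypothesis $p>d$ is not met, but the same counting argument applied directly to all the subsets at once still produces a common point, so the result holds trivially.)

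The main obstacle, such as it is, lies entirely in the counting bookkeeping of Step 1 --- specifically in correctly accounting for how many points are removed when intersecting $d+1$ subsets. The union bound $\bigl|\bigcup_{j}(\mathcal{A}\setminus S_j)\bigr|\leq n(d+1)$ is what drives the whole argument, and matching it against the hypothesis $m\geq n(d+1)+1$ is exactly what forces a surviving point. Everything else is a direct application of Helly's Theorem and the definition of a convex hull; no continuity, compactness, or dimension-specific geometry beyond what Helly already encapsulates is required.
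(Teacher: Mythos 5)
Your proof is correct and follows essentially the same route as the paper: both verify the Helly hypothesis by noting that any $d+1$ of the convex hulls jointly discard at most $n(d+1)$ points of $\mathcal{A}$, so the assumption $m\geq n(d+1)+1$ forces a common surviving point, and then Helly's Theorem yields the nonempty intersection. The only (cosmetic) difference is in handling the requirement that Helly needs more than $d$ sets: the paper proves $\binom{m}{n}>d$ for $n\geq 1$ and treats $n=0$ separately, while you cover the degenerate regime by applying the counting argument directly to all subsets at once --- both are valid.
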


Combining the results above with Assumption \ref{lm:existence}, it is concluded that both $\varPsi (\mathcal{Y}^i(p,k), F)$ and $\varPsi (\mathcal{Z}^i(p,k), F)$ are nonempty for any $p\in\mathcal{D}$. Therefore, $y^i(p,k)$ and $z^i(p, k)$ are well-defined.

We next discuss the computation of $y^i(p,k)$ in Step 2b), by which $z^i(p,k)$ can be calculated similarly. For simplicity, we omit the dimension index $p$ and time index $k$ in the sequel of this subsection.

Let us denote $\kappa=(d+1)F+1$. Note that $\varPsi (\mathcal{Y}^i, F)$ is the intersection of $r= \binom{\kappa}{F}$ convex hulls, with each of which formed by a set of $q=dF+1$ points. For any of these sets, let us define the matrix with the points in it as
\[
Y_{j}\triangleq\left[\begin{array}{llll}
x^{j_{1}} & x^{j_{2}} & \cdots & x^{j_{q}}
\end{array}\right] \in \mathbb{R}^{d \times q}.
\]	
We then denote
\[
Y\triangleq\operatorname{diag}\left\{Y_{j}, j=1,2, \ldots, r\right\} \in \mathbb{R}^{d r \times q r}.
\]

\begin{example}
	Suppose that $\mathcal{Y}^i=\{x^1,x^2,x^3,x^4\}$ and $F=1, d=2$. One thus has
	\[
	\begin{aligned}
	Y_{1} &=\left[\begin{array}{lll}
	x^{1} & x^{2} & x^{3}
	\end{array}\right], Y_{2}=\left[\begin{array}{lll}
	x^{1} & x^{2} & x^{4}
	\end{array}\right], \\ Y_{3}&=\left[\begin{array}{lll}
	x^{1} & x^{3} & x^{4}
	\end{array}\right], Y_{4}=\left[\begin{array}{lll}
	x^{2} & x^{3} & x^{4}
\end{array}\right], \end{aligned}
	\] and
	$$Y=\operatorname{diag}\left\{Y_1, Y_2, Y_3, Y_4\right\}.$$
\end{example}

Instead of \eqref{eqn:varPsi}, the following lemma provides another way to represent $\varPsi (\mathcal{Y}^i, F)$ in terms of linear constraints:
\begin{lemma}[\cite{wang2019resilient}]\label{lmm:equivalent_y}
	Let \(C \in \mathbb{R}^{r \times r}\) be the circulant matrix with the first row as \(\left[\begin{array}{lllll}1 & -1 & 0 & \cdots & 0\end{array}\right] .\) Then
	\[
	\varPsi (\mathcal{Y}^i, F)=\left\{\frac{1}{r}\left(\mathbf{1}_{r}^{\prime} \otimes I_{d}\right) Y \gamma\right\},
	\]
	for all \(\gamma \in \mathbb{R}^{qr}\) such that
	\begin{equation}\label{eqn:constraint}
	\begin{aligned}
	\begin{bmatrix}
	\left(C \otimes I_{d}\right) Y \\
	\left(I_{r} \otimes \mathbf{1}_{q}^{\prime}\right)
	\end{bmatrix} \gamma &=\begin{bmatrix}
	\mathbf{0}_{dr} \\
	\mathbf{1}_{r}
	\end{bmatrix}, \\
	\gamma  &\geq \mathbf{0}_{qr}.
	\end{aligned}
	\end{equation}
\end{lemma}

Note that any point in $\varPsi (\mathcal{Y}^i, F)$ is acceptable for achieving the resilient consensus. We could, thereby, choose any $\gamma^*$ satisfying \eqref{eqn:constraint}, and compute $y^i=\left(\mathbf{1}_{r}^{\prime} \otimes I_{d}\right) Y \gamma^*/r$. To this end, Phase I method in {\cite[Section~11.4]{boyd2004convex}} will be adopted by solving the following linear programming: 
\begin{equation}
\begin{aligned}
\max \qquad &\alpha\\
s.t. \qquad & \begin{bmatrix}\left(C \otimes I_{d}\right) Y \\
\left(I_{r} \otimes \mathbf{1}_{q}^{\prime}\right)
\end{bmatrix} \gamma =\begin{bmatrix}
\mathbf{0}_{dr} \\
\mathbf{1}_{r}
\end{bmatrix},\\
&\gamma_i\geq\alpha, \;\;i=1,2,...,qr.
\end{aligned}
\end{equation}
As $\varPsi (\mathcal{Y}^i, F)\neq\varnothing$, the problem is always solvable, with \revise{the computational complexity of \(\mathcal{O}\left((qr)^{3}\right)\)} (\hspace{1pt}\cite{boyd2004convex}). It is not surprising that this computational cost is independent of $|\mathcal{N}_i^+|$, as the cardinality of $\mathcal{Y}^i$ is fixed as $(d+1)F+1$. As a consequence, the algorithm will not introduce higher complexity in the network where some agent has a large number of in-neighbors. 

\subsection{Performance analysis}\label{sec:analysis}
In view of \eqref{eqn:updatewithattack}, the first two conditions in Theorem~\ref{thm:resilientCon} are naturally satisfied by noting that $\alpha=1/2$ and $\varepsilon^i(k)= 0, \forall i$. To prove the efficiency of Algorithm~\ref{alg:resilient}, the rest of this subsection aims to verify C3 and C4 of Theorem~\ref{thm:resilientCon}.

To this end, let us define the following set
\begin{equation}\label{def:S_i}
\mathcal S^i(k) \triangleq \varPsi (\mathcal{X}^i(k), F),
\end{equation}
where recall that $\mathcal{X}^i(k)$ is the collection of states received from in-neighbors at time $k$. We next introduce the following result:

\begin{lemma}\label{lmm:convexhull}
	Consider the network $\mathcal{G}=(\mathcal{V}, \mathcal{E})$. Suppose that the misbehaving agents follow either $F$-local or $F$-total attack model. With Algorithm \ref{alg:resilient}, it holds for any $i\in\mathcal{B}$ that
	\begin{enumerate}
	\item $y^i(p,k) \in \mathcal S^i(k)$ and $z^i(p,k) \in \mathcal S^i(k)$;
	\item $\mathcal{S}^i(k)$ is a subset of the convex hull formed by the states of benign agents, namely:
	\begin{equation}
		\mathcal{S}^i(k) \subset \Xi(k);
	\end{equation}
	\item \rm{C3} in Theorem~\ref{thm:resilientCon} is guaranteed. Moreover, 	
	\begin{equation}
		\Lambda^i(k) \subset \Xi(k),
	\end{equation} 
where $\Lambda^i(k)$ is defined in \eqref{eqn:Lambda^i}. 
\end{enumerate}
\end{lemma}


As a direct result of Lemma~\ref{lmm:convexhull}, the following corollary presents a non-expansion property of $\Xi(k)$: 
\begin{corollary}\label{thm:safety}
	Consider the network $\mathcal{G}=(\mathcal{V}, \mathcal{E})$. Suppose that the misbehaving agents follow either $F$-local or $F$-total attack model. With Algorithm \ref{alg:resilient}, the following relation holds at any $k\geq 0$:
	\begin{equation}
	\Xi(k+1)\subset \Xi(k).
	\end{equation}
\end{corollary}
\begin{pf*}{Proof.}
Consider any benign agent $i$. By virtue of Lemma~\ref{lmm:convexhull}, one directly has $x^i(k+1)\in\Xi(k+1)$ as it is a convex combination of some points in $\Xi(k+1)$. Since this holds for any normal node, the proof is completed.
\hfill$\square$
\end{pf*}

Now we present sufficient conditions on network topology, under which the final condition in Theorem~\ref{thm:resilientCon} will be met:
\begin{lemma}\label{lmm:topo1}
	Consider the network $\mathcal{G}=(\mathcal{V}, \mathcal{E})$. Suppose that the network satisfies one of the following conditions:\\
	1) under $F$-local attack model, and is $((d+1)F+1)$-robust;\\
	2) under $F$-total attack model, and is $(dF+1,F+1)$-robust.\\
	With Algorithm \ref{alg:resilient}, {\rm C4} in Theorem~\ref{thm:resilientCon} is guaranteed. 
\end{lemma}

With the above results, one immediately concludes that the proposed algorithm facilitates the resilient consensus, as stated below:
\begin{theorem}\label{thm:converge}
	Consider the network $\mathcal{G}=(\mathcal{V}, \mathcal{E})$. Suppose that the network satisfies one of the following conditions:\\
	1) under $F$-local attack model, and is $((d+1)F+1)$-robust;\\
	2) under $F$-total attack model, and is $(dF+1,F+1)$-robust.\\
	With Algorithm \ref{alg:resilient}, it holds for any $k\in\mathbb{Z}_{\geq 0}$ and any $\ell\in\mathcal{D}$ that:
	\begin{equation}
	\Delta_\ell (k+|\mathcal{B}|)\leq \bigg(1-\frac{1}{2\cdot 4^{|\mathcal{B}|}}\bigg)\Delta_\ell(k),
	\end{equation}
	where $\Delta_\ell(k)$ is defined in \eqref{eqn:difference}. Moreover, benign agents exponentially achieve resilient consensus within the convex hull formed by their initial states, regardless of
	the actions of misbehaving ones. 
\end{theorem}
\begin{pf*}{Proof.}
	The theorem is immediately achieved by invoking Corollary~\ref{thm:safety} and substituting \eqref{eqn:delta} with $\alpha=0.5$ and $\delta(k)=0,\forall k$.
	\hfill$\square$
\end{pf*}

\begin{remark}
	By definitions, we note that any $((d+1)F+1)$-robust graph is $(dF+1,F+1)$-robust as well, but not vice versa. This is to say, a network which is able to tolerate $F$-local attacks could also survive the $F$-total attacks, while the converse may not be true. This observation is consistent with the fact that $F$-total attacks are special versions of $F$-local ones.
\end{remark}

By Theorem~\ref{thm:converge}, Algorithm~\ref{alg:resilient} guarantees that all benign agents reach an agreement on a weighted average of their initial states. Moreover, one concludes from Corollary~\ref{thm:safety} that the benign agents will never move out of $\Xi(0)$, i.e., the convex hull formed by their initial states. 

\begin{remark}\label{rmk:delta}\revise{
Let us recall the $\delta$-validity condition in Definition~\ref{def:resilientCon}. Notice that Algorithm~\ref{alg:resilient} enforces the validity condition with $\delta=0$, namely, the states of benign agents will converge to a point in $\Xi(0)$. However, as discussed before, this algorithm is computationally expense when $d$ or $F$ is large. Regarding this concern, if the system is willing to allow more performance loss, one can choose a larger $\delta$ and adopt some computationally efficient methods to approximate $y^i(p,k)$ and $z^i(p,k)$ instead of exactly calculating them. Hence, introducing $\delta$ in the validity condition also offers more freedom to the design of resilient consensus algorithms. }
\end{remark}

\subsection{\revise{Comparisons with existing works}}\label{sec:compare}
Before closing this section, we present discussions on comparing Algorithm~\ref{alg:resilient} with the existing resilient consensus algorithms. Specifically, our algorithm is shown to outperform the native approach, which independently applies the scalar algorithms to each component of vector states, in term of convergence accuracy. Moreover, as compared to other resilient vector consensus solutions, Algorithm~\ref{alg:resilient} has the advantage of reducing the computational cost.

\subsubsection{Comparisons with existing resilient consensus algorithms on scalar states}
In literature, there has been much work that proved to be effective in the simple scalar case (e.g., MSR in \cite{dolev1986reaching} and W-MSR in \cite{leblanc2013resilient}). Therefore, \revise{one might consider a naive way to achieve the resilient consensus in vector spaces by simply applying the existing scalar protocols to each component of the states through Kronecker product. Nevertheless, by doing so, the region that benign agents converge to is only guaranteed as a hypercube which contains their initial states. In contrast, Algorithm~\ref{alg:resilient} can achieve a sharper result with the decision vector exactly within the convex hull formed by these states. To see this, we present a $2$-dimensional illustration in Fig. \ref{Fig: boxvsconvexhull}. } 
\begin{figure}[!htbp]
	\centering
	\begin{tikzpicture}[scale=0.8]
\draw[color=gray!90!white,fill=green,fill opacity=0.2] (0,3.5)--(3,0)--(5,1.5)--(0,3.5);
\draw[fill=gray,dashed,fill opacity=0.18] (0,0) rectangle (5.,3.5);
\node[shape=circle,inner sep=0.5pt,fill=gray!50!black,fill opacity=1] (1) at (0,3.5) {$\;$};
\node[shape=circle,inner sep=0.5pt,fill=gray!50!black,fill opacity=1] (2) at (3,0) {$\;$};
\node[shape=circle,inner sep=0.5pt,fill=gray!50!black,fill opacity=1] (3) at (5,1.5) {$\;$};
\end{tikzpicture}%
	\caption{A $2$D illustration with agents marked with circles. The location of the node indicates its initial state. With the simple application of MSR or W-MSR to each dimension, the final agreement is ensured to be within the rectangle. In contrast, Algorithm~\ref{alg:resilient} offers a solution that lies in a sharper area, i.e., the green triangle.}
	\label{Fig: boxvsconvexhull}
\end{figure}

\revise{There are many examples where guaranteeing the solution within this convex hull is important. For instance, in some stochastic problem settings, each agent has a probability vector as its initial state. By forcing the decision vector to stay within the convex hull of their initial values, the benign agents can also agree on a probability vector.  As another example, for the convex optimization problems, the set
of feasible solutions is convex. Assuming that every benign agent
proposes a feasible solution, staying in this convex hull guarantees that the decision vector is also feasible. Using scalar consensus algorithm along each dimension, however, is insufficient to provide these guarantees in the examples above. Therefore, the problem of resilient vector consensus gains the research interests.}

\subsubsection{Comparisons with existing resilient consensus algorithms on vector states}
We next compare Algorithm~\ref{alg:resilient} with the existing solutions on resilient vector consensus. To this end, let us recall the updating rules in Algorithm~\ref{alg:resilient}. At every step, the normal agent $i$ obtains the states in its in-neighborhood, whereas up to $F$ of them might be faulty. To ensure its state is updated in a safe manner, agent $i$ hopes to use only good inputs, the cardinality of which is $|\mathcal{N}_i^+|-F$. Yet as it has no knowledge on the identities of these values, it seeks for \textit{resilient convex combination}, which is a point contained in the convex hull formed by \textit{any} $|\mathcal{N}_i^+|-F$ neighboring states.

Some works achieve this through calculating Tverberg points (\cite{vaidya2013byzantine,vaidya2014iterative}). While the results therein are elegant, the calculation of Tverberg points is rather costly and almost impossible in many cases (\hspace{1pt}\cite{agarwal2008algorithms}). The existing works unfortunately do not provide an efficient way to do it, which leads to a major concern in applying Tverberg points to facilitate resilient consensus. Hence, instead of using Tverberg points, this paper, inspired by Helly's Theorem, \revise{achieves the resilient convex combination by leveraging the intersection of a set of convex hulls, namely, $\mathcal{S}^i(k)$ as defined in \eqref{def:S_i}.} 
	
\revise{In Fig.~\ref{fig:safekernel}, we present a $2$D illustration of $\mathcal{S}^i(k)$. According to Definitions~\ref{def: S} and \ref{def: intersection}, it intuitively ignores the effects from the combination of any $F$ values. Therefore, it is ``safe'' in the sense that it must be contained in the convex hull formed by only benign states, as formally proved in the second statement of Lemma~\ref{lmm:convexhull}. We thus term $\mathcal{S}^i(k)$ as a ``safe kernel''. At any time, the healthy agent modifies its state towards this kernel. The effects of malicious agents are thus limited.}
	
	\begin{figure}[!htbp]
		\centering
		\begin{tikzpicture}[fill opacity=0.1]

\begin{axis}[%
width=1.8in,
height=1.2in,
color= white,
scale only axis,
xtick=\empty,
xmin=-0.2,
xmax=2,
ytick=\empty,
ymin=-0.2,
ymax=4.1,
axis lines=middle, 
]
\draw[color=black,fill=gray,fill opacity=0.03] (0,3)--(0.4,0.7)--(1.5,0)--(1,4)--(0,3);
\draw[color=black,fill=gray,fill opacity=0.03] (0,3)--(0.4,0.7)--(1.5,0)--(1.8,2.2)--(0,3);
\draw[color=black,fill=gray,fill opacity=0.03] (0,3)--(0.4,0.7)--(1.8,2.2)--(1,4)--(0,3);
\draw[color=black,fill=gray,fill opacity=0.03] (0,3)--(1.5,0)--(1.8,2.2)--(1,4)--(0,3);
\draw[color=black,fill=gray,fill opacity=0.03] (0.4,0.7)--(1.5,0)--(1.8,2.2)--(1,4)--(0.4,0.7);
\node[shape=circle,inner sep=1.0pt,fill=gray!50!black,fill opacity=1] (1) at (0,3) {$\;$};
\node[shape=circle,inner sep=1.0pt,fill=gray!50!black,fill opacity=1] (2) at (0.4,0.7) {$\;$};
\node[shape=circle,inner sep=1.0pt,fill=gray!50!black,fill opacity=1] (3) at (1.5,0) {$\;$};
\node[shape=circle,inner sep=1.0pt,fill=gray!50!black,fill opacity=1] (4) at (1.8,2.2) {$\;$};
\node[shape=circle,inner sep=1.0pt,fill=gray!50!black,fill opacity=1] (5) at (1,4) {$\;$};
\draw[color=black,fill=green,fill opacity=0.2] (0.6,1.8)--(0.7570,2.6635)--(1.1912,2.4706)--(1.2929,1.6568)--(0.8883,1.2233)--(0.600,1.8);

\end{axis}
\end{tikzpicture}%
		\caption{A $2$D illustration of ``safe kernel''. Suppose that agent $i\in\mathcal{B}$ has $5$ in-neighbors and each of their states is represented by a black circle. Let $F=1$. The green region denotes the safe kernel $\mathcal{S}^i(k)=\varPsi(\mathcal{X}^i(k), F)$.}
		\label{fig:safekernel}
	\end{figure}
	
\revise{Notice that, in \cite{mendes2013multidimensional} and \cite{yan2019safe}, the authors require the safe kernel to be exactly calculated. However, since this kernel is the intersection of $\binom{|\mathcal{N}_i^+|}{F}$ convex hulls, the existing approaches for computing the kernel are usually difficult (\cite{rademacher2007approximating}). Moreover, as the computation of $\mathcal{S}^i(k)$ depends on every state in $\mathcal{X}^i(k)$, their algorithms may fail when node $i$ has a large number of in-neighbors. In contrast, instead of computing the whole kernel, only $2d$ points in it are calculated in Algorithm~\ref{alg:resilient}. As shown in Section~\ref{sec:middle_points}, this step can be achieved by linear programming, the computational complexity of which will not increase in a larger neighborhood. Hence, our algorithm yields a lower cost than these existing works.}

\section{A modification of Algorithm~\ref{alg:resilient}}\label{sec:algo2}
Despite achieving a decent performance in the presence of attacks/faults, note that Algorithm~\ref{alg:resilient} can only guarantee a sub-optimal performance in benign environment\footnote{\revise{To be more specific, this refers to the case of ``false alarm''. Namely, the benign agents believe there exist $F>0$ faulty nodes and thus perform Algorithm~\ref{alg:resilient} with parameter $F$. However, in fact no faulty ones exist and the network is operating in the benign environment. In this case, each agent, by performing Algorithm~\ref{alg:resilient}, will actively abandon some information it received (even though they are benign) as it regards them as misleading.}}, namely when $|\mathcal{F}|=0$. To be particular, it only ensures converging to an agreement within $\Xi(0)$ instead of to the exact average of the agents' initial states. This means that in order to increase the system's resiliency, Algorithm~\ref{alg:resilient} sacrifices the system's performance during normal operations.

To cope with this issue, \revise{this section seeks an alternative solution that equips the standard average consensus algorithms with an additional security guarantee.} In particular, given any standard average consensus algorithm, a modification of Algorithm~\ref{alg:resilient} is proposed, which switches between the standard one in the absence of faulty nodes and Algorithm~\ref{alg:resilient} in the presence of false nodes.

For ease of illustration, let us take the linear update rule as an example\footnote{The results can be readily generalized to enhance the security of other standard average consensus algorithms such as the ones using nonlinear updates or tackling switching topology.}. Suppose that the following protocol solves the problem of average consensus:
\begin{equation}\label{eqn:linear}
x^i(k+1) = a^{ii} x^i(k) + \sum_{j\in\mathcal{N}_i^+} a^{ij} x^j(k)
\end{equation}
such that $a^{ii}+\sum_{j\in\mathcal{N}_i^+} a^{ij}=1$ and each weight lowered bounded by some $\alpha\in(0,1)$. Given \eqref{eqn:linear}, we shall show how to enhance the security of it.

To this end, let us define $$\tilde{a}^{ij} \triangleq \frac{a^{ij}}{1-a^{ii}}.$$
Based on \eqref{eqn:linear}, a resilient consensus strategy is presented in Algorithm~\ref{alg:switch}.

\begin{algorithm}
	1:\: Receive the states from all in-neighbors $j\in\mathcal{N}^+_i$, and collect these values in $\mathcal{X}^i(k)$.\\
	2:\: Agent $i$ calculates the weighted average of the received states as
	\begin{equation}\label{eqn:alg2}
	\bar{x}^i(k) = \sum_{j\in\mathcal{N}_i^+}\tilde{a}^{ij}x^j(k).
	\end{equation}
	3:\: Depending on the distance among received states, let us define 
	\begin{align*}\label{eqn:switch}	
	\hspace{-100pt}
		\breve{x}^i(k)\triangleq
\end{align*}
	\begin{numcases}{}
		\bar{x}^i(k),  \text{ if } ||x^j(k)-\bar{x}^i(k)||\leq \lambda(k), \forall j\in\mathcal{N}_i^+, \label{eqn:ave}\\
	\cen(\Lambda^i(k)),  \text{ otherwise}, \label{eqn:resilient}
	\end{numcases}
	where $\Lambda^i(k)$ is calculated from Steps 2--3 in Algorithm~\ref{alg:resilient} and                                                                                                                                                                                                                                                                                                                                                                                                                                                                                                                                                                                                                                                                                                                                                                                                                                                                                                                                                                                                                                                                                                                                                                                                                                                                                                                                                                                                                                                                                                                                                                                                                                                                                                                                                                                                                                                                                                                                                                                                                                                                                                                                                                                                                                                                                                                                                                                                                                                                                                                                                                                                                                                                                                                                                                                                                                                                                                    the threshold is set as $\lambda(k)=c\sigma^{k}$ with $c>0$ and  $0<\sigma<1$. Agent $i$ updates its local state by the following law:
	\begin{equation}\label{eqn:switch1}
	x^i(k+1) = a^{ii}x^i(k)+(1-a^{ii})\breve{x}^i(k).
	\end{equation}
	4:\: Transmit the new state $x^i(k+1)$ to out-neighbors $j\in\mathcal{N}_i^-$. 
	\caption{Resilient consensus algorithm $2$ for benign agent $i$}
	\label{alg:switch}
\end{algorithm}

Notice that the key idea of Algorithm~\ref{alg:switch} is to ``resiliently" compute the local average at each step.
Intuitively, when the difference among received states is small, we regard the effects of misbehaving ones (if any) as negligible and adopt the local average $\bar{x}^i(k)$ for updates. Otherwise, as proved in Section~\ref{sec:alg}, $\cen(\Lambda^i(k))$ offers a safe way to replace $\bar{x}^i(k)$ and prevents benign states from being significantly affected. This switching rule plays a vital role in enhancing the resiliency in adversarial environment while guaranteeing the optimality when no faulty agents exist, as will be proved soon.

\subsection{Performance analysis}
In this subsection, we shall respectively analyze the performance of Algorithm~\ref{alg:switch} in the presence and absence of faulty nodes.

\subsubsection{Performance in the presence of faulty nodes}
Let us consider the scenario where the faulty agents exist. To begin with, we shall first provide an equivalent form of the updating rule in Algorithm~\ref{alg:switch}.
\begin{lemma}\label{lmm:equvi}
	Consider the network $\mathcal{G}=(\mathcal{V}, \mathcal{E})$. Suppose that the benign agent $i\in\mathcal{B}$ updates with Algorithm \ref{alg:switch}. Then the update of it \eqref{eqn:alg2}--\eqref{eqn:switch1} is equivalent to the following form at any $k\geq 0$:
	\begin{equation}\label{eqn:eqv}
	\begin{split}
	x^i(k+1) = a^{ii} x^i(k)+ (1-a^{ii} )\cen(\Lambda^i(k))  + \varepsilon^i(k),
	\end{split}
	\end{equation}
	where $\lim_{k\rightarrow \infty}\varepsilon^i(k)=0$ and $\sum_{k=0}^\infty ||\varepsilon^i(k) ||<\infty$. 
\end{lemma}

For convenience, let us set $\tilde{x}^i(k) \triangleq \cen(\Lambda^i(k)).$ In view of \eqref{eqn:eqv}, the new state is a convex combination of its current state $x^i(k)$ and an auxiliary point $\tilde{x}^i(k)$, summed by an exponential decaying term $\varepsilon^i(k)$. Leveraging \eqref{eqn:eqv}, it is not difficult to check that all conditions in Theorem~\ref{thm:resilientCon} are met. Therefore, the resiliency of Algorithm~\ref{alg:switch} is verified. We formally state the conclusion as below:
\begin{theorem}\label{thm:converge2}
	Consider the network $\mathcal{G}=(\mathcal{V}, \mathcal{E})$. Suppose that the network satisfies one of the following conditions:\\
	1) under $F$-local attack model, and is $((d+1)F+1)$-robust;\\
	2) under $F$-total attack model, and is $(dF+1,F+1)$-robust.\\
	With Algorithm \ref{alg:switch}, benign agents exponentially achieve resilient consensus within the following set, regardless of
	the actions of misbehaving ones: 
	\begin{equation}\label{eqn:bound}
	\Xi \triangleq \bigg\{x:||x-\chi|| \leq \frac{c(1-\alpha)}{1-\sigma}, \exists \chi\in \Xi(0)\bigg\}.
	\end{equation}
\end{theorem}
\begin{pf*}{Proof.}
	Notice that 
	\begin{equation}
	\begin{split}
	\delta(k) &\triangleq \max_{i\in\mathcal{B}}||\varepsilon^i(k)||\leq (1-\alpha)\lambda(k)=(1-\alpha)c\sigma^k.
	\end{split}
	\end{equation}
	Recalling Lemma~\ref{lmm:H_i}, the theorem can be obtained. 
	\hfill$\square$
\end{pf*}

The threshold in Algorithm \ref{alg:switch}, i.e. $\lambda(k)$, can be interpreted as the disagreement of states that a benign agent is willing to tolerate in its neighborhood. Increasing $\lambda(k)$ leads to a higher possibility to use the local average $\bar{x}^i(k)$ in updates, and thus results in the range where the final agreement converges to, i.e. $\Xi$, being less sharper during attacks. However, as to be proved later, by choosing a ``large enough'' $\lambda(k)$, Algorithm~\ref{alg:switch}  is capable of accounting for the state disagreement in benign environment and thus achieving the optimal performance during normal operation.

\begin{remark}
Notice that \eqref{eqn:bound} presents the worst-case upper bound, which may be too conservative and can never be touched in the practical scenario. Therefore, how to infer a tighter bound will be left as our future work.
\end{remark}

\begin{remark}
	Compared with the simple approach of directly using the existing scalar algorithms (like MSR or W-MSR) to each dimension of the state vectors, \revise{Algorithms~\ref{alg:resilient} and \ref{alg:switch} have an obvious limitation since they inevitably introduce higher computational costs, especially when $d$ or $F$ is large.} Notice that this is a fundamental limitation of any resilient vector consensus algorithms in this line of research, including \cite{vaidya2013byzantine,vaidya2014iterative,yan2019safe,mendes2013multidimensional}. However, on the other hand, our work would provide a better convergence result as shown in Fig. \ref{Fig: boxvsconvexhull}. This fact indicates a trade-off exists between high consensus accuracy and low computational complexity. Our work provides a fine tuning of this trade-off. For example, in practice, it might be the case that certain components of system variables represent more critical information. In this case, one can partition $d$ dimensions into several subgroups and apply Algorithm~\ref{alg:resilient} or \ref{alg:switch} on the group of these critical ones to better protect the system, while applying MSR or W-MSR to the other components to reduce the computational burden. 
\end{remark}

\subsubsection{Performance in the absence of faulty nodes}
This part investigates the performance of Algorithm~\ref{alg:switch} in benign environment. We will show that this performance agrees with that of the standard rule \eqref{eqn:linear}. Therefore, even under the case of false alarm, where the agents mistakenly assume that there exist $F$ faulty ones (but indeed there are not), the exact average consensus can still be achieved by performing Algorithm~\ref{alg:switch}.

To see this, let us define the following matrix:
\begin{equation}
A\triangleq [a^{ij}], \; \revise{L \triangleq I-A, \;\hat L \triangleq \frac{L+L^T}{2}}.
\end{equation}
We denote by $\mu_2$ the Fiedler eigenvalue of $\hat L$, i.e., $$\mu_2 \triangleq \min _{v \neq 0 \atop \1^{\mathrm{T}} v=0} \frac{v^{T} \hat L v}{\|v\|^{2}}.$$ The following theorem provides conditions under which Algorithm~\ref{alg:switch} coincides with the conventional average consensus protocol \eqref{eqn:linear} and achieves the optimal solution in fault-free environment:

\begin{theorem}\label{thm:benign}
	Consider the network $\mathcal{G}=(\mathcal{V}, \mathcal{E})$. Suppose that each agent updates with Algorithm~\ref{alg:switch}, where
	$c \geq (1-\alpha)\max_{i,j\in\mathcal{V}} ||x^i(0)-x^j(0)||$ and $\sigma \geq \mu_2.$ When no misbehaving agent exists, Algorithm \ref{alg:switch} is equivalent to \eqref{eqn:linear}. In particular, it facilitates the average consensus among agents in benign environment. That is, it holds that 
	\begin{equation}
	\lim_{k\rightarrow \infty} x^i(k) =\frac{1}{N}\sum_{i=1}^{N}x^i(0), \;\forall i\in\mathcal{V}.
	\end{equation}
\end{theorem}
\begin{pf*}{Proof.}
	Consider the update of agent $i$ at any time $k$. Invoking \cite{olfati2004consensus}, the difference between any in-neighboring state $x^j(k)$ ($\forall j\in\mathcal{N}_i^+$) and the local weighted average $\bar{x}^i(k)$ will not exceed $\lambda(k)$ during normal operation. Therefore, any agent always uses $\bar{x}^i(k)$ for updates and Algorithm~\ref{alg:switch} is equivalent to \eqref{eqn:linear}. As assumed, \eqref{eqn:linear} solves the problem of average consensus. The proof thus follows.
	\hfill$\square$
\end{pf*}

	\begin{remark}
		\revise{
		The upper bound of $||x_i(0)-x_j(0)||$ can be estimated in many practical systems, since the states of agents represent quantities that are physically constrained (\cite{vaidya2013byzantine}). For instance, if the states are $2$-dimensional stochastic vectors, it is calculated that $||x_i(0)-x_j(0)||\leq \sqrt{2}, \;\forall i,j\in\mathcal{V}$. If the states represent locations of mobile robots in a $3$-dimensional space, then $\max_{i,j\in\mathcal{V}} ||x_i(0)-x_j(0)||$ is determined by boundary of the region where robots are allowed to operate.}
	\end{remark} 

Therefore, Algorithm~\ref{alg:switch} maintains the optimal performance of the standard consensus algorithm during normal operation while raising its level of security in the presence of faults/attacks. Compared with Algorithm~\ref{alg:resilient}, where only a weighted average consensus is reached by agents, it achieves the exact average in benign environment, at the expense of sacrificing the performance in the presence of attacks, which may lead the benign agents to move out of the convex hull formed by their initial states. We should notice that the trade-off between the performance with and without attacks can be balanced through the threshold $\lambda(k)$.

\section{Numerical Example}\label{sec:casestudy}
In this section, we provide some numerical examples to verify the theoretical results established in the previous sections. 

\subsection{$2$-dimensional example}
We choose the $(3,2)$-robust \revise{digraph}\footnote{\revise{This network is established based on the robust digraph given in \cite{dibaji2017resilient}, by following the method proposed in \cite{leblanc2012consensus} which shows how to construct an $(r,s)$-robust digraph from an existing one.}} in Fig. \ref{Fig:network} as communication network, where the node set is $\mathcal{V} = \{1, \cdots,6\}$. According to Theorems~\ref{thm:converge} and \ref{thm:converge2}, the network can tolerate a single misbehaving node in a $2$-dimensional system. Suppose that this misbehaving agent intends to prevent others from reaching a correct consensus by setting its states as $x^2_1(k) = 4.5*\sin(k/5)$ and $x^2_2(k) = k/25+1$. On the other hand, the benign agents are initialized randomly within $[-1,1.5]\times[0.5,4]$.

 \begin{figure}[!htbp]
 \centering
 \begin{tikzpicture}[scale =0.38] 
\coordinate (O) at (0, 0);
\node[shape=circle,inner sep=4pt,fill=lightgray,fill opacity=0.7] (1) at (-5,4) {{\small $1$}};
\node[shape=circle,inner sep=4pt,fill=red,fill opacity=0.8] (2) at (0,4) {{\small $2$}};
\node[shape=circle,inner sep=4pt,fill=lightgray,fill opacity=0.7] (3) at (5,4) {{\small $3$}};
\node[shape=circle,inner sep=4pt,fill=lightgray,fill opacity=0.7] (4) at (-5,0) {{\small $4$}};
\node[shape=circle,inner sep=4pt,fill=lightgray,fill opacity=0.7] (5) at (0,0) {{\small $5$}};
\node[shape=circle,inner sep=4pt,fill=lightgray,fill opacity=0.7] (6) at (5,0) {{\small $6$}};

\draw[->] (1)--(-6.2,4)--(-6.2,5.5)--(5,5.5)--(3);
\draw[->] (6)--(6.2,0)--(6.2,-1.5)--(-5,-1.5)--(4);
\draw[->] (2) edge (3) (5) edge (4);
\draw[<->] (1)--(6);
\draw[<->] (3)--(4);
\draw[<->] (1)--(4);
\draw[<->] (2)--(5);
\draw[<->] (1) edge (5) (5) edge (3);
\draw[<->] (4)--(2);
\draw[<->] (2)--(6);
\draw[<->] (1)--(2);
\draw[<->] (3)--(6);
\draw[<->] (6)--(5);
\end{tikzpicture}
 \caption{\modify{A $(3,2)$-robust communication network.}}
 \label{Fig:network}
 \end{figure}

The performance of Algorithm~\ref{alg:resilient} is tested in Figs.~\ref{fig:trajectory} and \revise{\ref{fig:agreement}}. Results show that the benign agents are guaranteed to stay in the convex hull of their initial states at any time and they finally reach an agreement, which validates Theorem~\ref{thm:converge}. As compared to the native approach of independently applying W-MSR (\cite{leblanc2013resilient}) to each dimension of the vector state, our strategy yields a more accurate consensus result.

\begin{figure}
	\centering
	\begin{subfigure}[b]{0.4\textwidth}
		\input{tikz/helly.tikz}
		\caption{Trajectory of local states under Algorithm~\ref{alg:resilient}.}
	\end{subfigure}
\centering
	\begin{subfigure}[b]{0.4\textwidth}
		\input{tikz/MSR.tikz}
		\caption{Trajectory of local states under using W-MSR (\cite{leblanc2013resilient}) independently to each entry.}
	\end{subfigure}
	\caption{\modify{Comparison of local states' trajectory under different algorithms, where the area surrounded by the dashed lines is the convex hull formed by the initial states of benign agents.}}
	\label{fig:trajectory}
\end{figure}

\begin{figure}[!htbp]
	\centering
%
%
\definecolor{mycolor1}{rgb}{0.00000,0.44700,0.74100}%
\definecolor{mycolor2}{rgb}{0.85000,0.32500,0.09800}%
\definecolor{mycolor3}{rgb}{0.92900,0.69400,0.12500}%
\definecolor{mycolor4}{rgb}{0.49400,0.18400,0.55600}%
\definecolor{mycolor5}{rgb}{0.46600,0.67400,0.18800}%
\definecolor{mycolor6}{rgb}{0.30100,0.74500,0.93300}%
\definecolor{mycolor7}{rgb}{0.63500,0.07800,0.18400}%
\begin{tikzpicture}[scale=0.8]

\begin{axis}[%
width=2.2in,
height=1.5in,
at={(1.145in,3.292in)},
scale only axis,
xmin=1,
xmax=15,
ymin=-2,
ymax=2.6,
ylabel={$x^i_1(k)$},
axis background/.style={fill=white}
]
\addplot [color=mycolor1, line width=1.0pt ]
  table[row sep=crcr]{%
1	-0.294\\
2	0.499994875\\
3	0.50733175\\
4	0.418881106\\
5	0.440868388\\
6	0.441135439\\
7	0.440882064\\
8	0.44059675\\
9	0.440656634\\
10	0.44066776\\
11	0.440658383\\
12	0.440660726\\
13	0.440660574\\
14	0.440660471\\
15	0.440660456\\
};
\addplot [color=mycolor2, line width=1.0pt, dashed]
  table[row sep=crcr]{%
1   0.49667\\
2	0.973545856	\\
3	1.411606183	\\
4	1.793390227	\\
5	2.103677462	\\
6	2.330097715	\\
7	2.463624325	\\
8	2.498934008	\\
9	2.434619077	\\
10	2.273243567	\\
11	2.02124101	\\
12	1.688657951	\\
13	1.28875343	\\
14	0.837470375	\\
15	0.35280002	\\
16	-0.145935359	\\
17	-0.638852755	\\
18	-1.106301108	\\
19	-1.529644727	\\
20	-1.892006238	\\
21	-2.178939431	\\
22	-2.379005185	\\
23	-2.484227509	\\
24	-2.490411522	\\
25	-2.397310687	\\
26	-2.208636639	\\
27	-1.931911219	\\
28	-1.578166595	\\
29	-1.161505449	\\
30	-0.698538745	\\
31	-0.207723507	\\
32	0.291373012	\\
33	0.778853409	\\
34	1.235283378	\\
35	1.642466497	\\
36	1.98416966	\\
};
\addplot [color=mycolor3, line width=1.0pt ]
  table[row sep=crcr]{%
1	-1.099\\
2	-0.16633329\\
3	0.285222361\\
4	0.430344668\\
5	0.440951658\\
6	0.440896135\\
7	0.440257411\\
8	0.440673868\\
9	0.440623518\\
10	0.440645608\\
11	0.440660374\\
12	0.440660678\\
13	0.440660559\\
14	0.44066045\\
15	0.440660501\\
};
\addplot [color=mycolor4, line width=1.0pt ]
  table[row sep=crcr]{%
1	0\\
2	0.3\\
3	0.278005416\\
4	0.428158695\\
5	0.410473802\\
6	0.4308935\\
7	0.437555549\\
8	0.439851063\\
9	0.440384252\\
10	0.440578325\\
11	0.440631221\\
12	0.440650767\\
13	0.44065727\\
14	0.440659545\\
15	0.440660232\\
};
\addplot [color=mycolor6, line width=1.0pt ]
  table[row sep=crcr]{%
1	1.468\\
2	0.491333449\\
3	0.455670747\\
4	0.455282268\\
5	0.442775681\\
6	0.440344382\\
7	0.440609585\\
8	0.440571101\\
9	0.440614937\\
10	0.440640136\\
11	0.440651826\\
12	0.44065718\\
13	0.440659519\\
14	0.440660222\\
15	0.440660369\\
};
\addplot [color=mycolor7, line width=1.0pt]
  table[row sep=crcr]{%
1	2.121\\
2	0.511000188\\
3	0.355670341\\
4	0.450477102\\
5	0.435106751\\
6	0.438947849\\
7	0.440406225\\
8	0.440723455\\
9	0.440690372\\
10	0.440657391\\
11	0.440663227\\
12	0.440660491\\
13	0.44066063\\
14	0.440660441\\
15	0.440660499\\
};
\end{axis}

\begin{axis}[%
width=2.2in,
height=1.5in,
at={(1.145in,1.65in)},
scale only axis,
xmin=1,
xmax=15,
xlabel={Time $k$},
ymin=0.5,
ymax=3.8,
ylabel={$x^i_2(k)$},
axis background/.style={fill=white}
]
\addplot [color=mycolor1, line width=1.0pt]
table[row sep=crcr]{%
1	2.574\\
2	2.1058883968468\\
3	2.01351815642558\\
4	2.04555214384146\\
5	2.04372837824908\\
6	2.04175140235553\\
7	2.04259083419923\\
8	2.04268158518451\\
9	2.04266463614335\\
10	2.04265966628948\\
11	2.04266324551695\\
12	2.04266146884567\\
13	2.04266198968829\\
14	2.04266215416487\\
15	2.04266207132226\\
};
\addplot [color=mycolor2, line width=1.0pt, dashed]
table[row sep=crcr]{%
1	1.04\\
2	1.08\\
3	1.12\\
4	1.16\\
5	1.2\\
6	1.24\\
7	1.28\\
8	1.32\\
9	1.36\\
10	1.4\\
11	1.44\\
12	1.48\\
13	1.52\\
14	1.56\\
15	1.6\\
16	1.64\\
17	1.68\\
18	1.72\\
19	1.76\\
20	1.8\\
21	1.84\\
22	1.88\\
23	1.92\\
24	1.96\\
25	2\\
26	2.04\\
27	2.08\\
28	2.12\\
29	2.16\\
30	2.2\\
31	2.24\\
32	2.28\\
33	2.32\\
34	2.36\\
35	2.4\\
36	2.44\\
};
\addplot [color=mycolor3, line width=1.0pt ]
table[row sep=crcr]{%
1	3.695\\
2	2.49366657332073\\
3	2.1427775479101\\
4	2.05697470196246\\
5	2.03760197956847\\
6	2.04168624743094\\
7	2.04280054966699\\
8	2.04266075106358\\
9	2.04267423767544\\
10	2.04266783735285\\
11	2.04266239182309\\
12	2.04266147734486\\
13	2.04266198977978\\
14	2.04266203339511\\
15	2.04266201956753\\
};
\addplot [color=mycolor4, line width=1.0pt ]
table[row sep=crcr]{%
1	1.890\\
2	1.893\\
3	2.06512782494439\\
4	2.08761996897659\\
5	2.06514540078165\\
6	2.05497886547602\\
7	2.0465514542971\\
8	2.04343591770597\\
9	2.04291623578274\\
10	2.04274342306381\\
11	2.04269008762986\\
12	2.0426714512317\\
13	2.04266508711906\\
14	2.04266272797608\\
15	2.04266220745817\\
};
\addplot [color=mycolor6, line width=1.0pt ]
table[row sep=crcr]{%
1	3.058\\
2	2.5083331371173\\
3	2.18557143395601\\
4	2.09046466674599\\
5	2.05585844992139\\
6	2.0468754263463\\
7	2.04370274058469\\
8	2.04302726586222\\
9	2.04278736754517\\
10	2.04270569522235\\
11	2.04267734437766\\
12	2.04266724042446\\
13	2.04266342660069\\
14	2.04266244017708\\
15	2.04266216619203\\
};
\addplot [color=mycolor7, line width=1.0pt ]
table[row sep=crcr]{%
1	0.754\\
2	1.96733303424721\\
3	2.03252956143995\\
4	2.02352470360752\\
5	2.04387941364329\\
6	2.04377870039205\\
7	2.0424271657868\\
8	2.04253630511216\\
9	2.04261929086381\\
10	2.04265394065565\\
11	2.04265862370831\\
12	2.04266226059618\\
13	2.04266173709077\\
14	2.04266201786029\\
15	2.04266202083059\\
};
\end{axis}
\end{tikzpicture}%
	\caption{\modify{The states of agents under Algorithm~1, where the red dashed line represent the state of faulty agent, and the others represent those of benign ones.}}
	\label{fig:agreement}
\end{figure}

We next compare the performance of Algorithm~\ref{alg:switch} with those of 1) Algorithm~\ref{alg:resilient}, 2) the standard average consensus algorithm given in \cite{olfati2004consensus}, and 3) the native approach of simply applying W-MSR to each entry. In \cite{olfati2004consensus}, the adjacent matrix is set as
\begin{equation}
\modify{A = \begin{bmatrix}
0.3  &0.2    &0   &0.2 &0.15  &0.15\\
0.15 &0.25   &0   &0.25 &0.15  &0.2\\
0.1  &0.1   &0.2  &0.25  &0.25  &0.1\\
0.15 &0.1   &0.2  &0.3  &0.15  &0.1\\
0.15 &0.15  &0.4   &0   &0.15  &0.15\\
0.15 &0.2   &0.2   &0   &0.15  &0.3
\end{bmatrix},}
\end{equation}
where each weight is lower bounded by $\alpha=0.1$. To enforce the conditions in Theorem~\ref{thm:benign}, the parameters in Algorithm~\ref{alg:switch} are set as $c=4.5$ and \revise{$\sigma = 0.6$}.

As shown in Figs.~\ref{fig:woattack} and \ref{fig:wattack}, in the fault-free environment, Algorithm~\ref{alg:switch} achieves the same performance as that of \cite{olfati2004consensus}, while Algorithm~\ref{alg:resilient} and W-MSR, although can guarantee an agreement among agents, are unable to converge to the exact average of agents' initial states. On the other hand, Algorithm~\ref{alg:switch} performs similarly to other resilient solutions to provide additional security guarantees in the presence of network misbehaviors. 


\begin{figure}
	\begin{subfigure}[b]{0.45\textwidth}
%
%
\definecolor{mycolor1}{rgb}{0.00000,0.44700,0.74100}%
\definecolor{mycolor2}{rgb}{0.85000,0.32500,0.09800}%
\definecolor{mycolor3}{rgb}{0.92900,0.69400,0.12500}%
\definecolor{mycolor4}{rgb}{0.63500,0.07800,0.18400}%
\begin{tikzpicture}[scale=0.9]

\begin{semilogyaxis}[%
width=1.6in,
height=1.4in,
at={(1.011in,0.642in)},
scale only axis,
xmin=0,
xmax=20,
xlabel = {$k$},
ymin=-20,
ymax=30,
ylabel = {$\sum_{i\in\mathcal{V}}||x^i(k)-\bar{x}(k)||$},
legend entries={{\small Algorithm~\ref{alg:resilient}}, {\small Algorithm~\ref{alg:switch}}, {\small Standard algorithm}, {\small W-MSR}},
legend style={legend cell align=left, align=left, draw=white!15!black},
legend pos={outer north east}
]
\addplot[color=mycolor4, line width=1.0pt, dotted]
table[row sep=crcr]{%
1	6.35718381667918\\
2	2.85948897011215\\
3	1.32521266565948\\
4	0.649662589994511\\
5	0.321525450411105\\
6	0.159761463762685\\
7	0.07983695155899\\
8	0.0400563719748802\\
9	0.0200807968069406\\
10	0.0100586182600901\\
11	0.00503653526843353\\
12	0.00252164547807273\\
13	0.00126284995434225\\
14	0.000632758991836008\\
15	0.000318611084265647\\
16	0.000159953266270165\\
17	8.00644924459933e-05\\
18	4.01393650979275e-05\\
19	2.01341511113403e-05\\
20	1.00492102943383e-05\\
21	5.01958596370735e-06\\
22	2.4863727726501e-06\\
23	1.23265380335821e-06\\
24	6.03720276362196e-07\\
25	2.94696844925555e-07\\
26	1.41962111880397e-07\\
27	6.69594993288413e-08\\
28	3.07915927199182e-08\\
29	1.42534472124256e-08\\
30	6.41226145716984e-09\\
31	2.93283687847668e-09\\
32	1.33604317128865e-09\\
33	6.12935728224248e-10\\
34	2.83309903993281e-10\\
35	1.30141372971036e-10\\
36	5.87438990148783e-11\\
37	2.6624635940828e-11\\
38	1.20125053816255e-11\\
39	5.37504471558182e-12\\
40	2.41758029116506e-12\\
41	1.08113994277826e-12\\
42	4.86012455851995e-13\\
43	2.16569342483308e-13\\
44	9.80154189216547e-14\\
45	4.37473342971193e-14\\
46	2.01170746923165e-14\\
47	8.9194914440109e-15\\
48	4.04670992233117e-15\\
49	2.1591240957302e-15\\
50	1.24745413748737e-15\\
};

\addplot [color=mycolor1, line width=1.0pt]
table[row sep=crcr]{%
1	8.99104678423662\\
2	2.25694732588403\\
3	0.763234903570932\\
4	0.268185641756166\\
5	0.102994273803756\\
6	0.0365124958063614\\
7	0.0141662585684589\\
8	0.00508871186911525\\
9	0.0019782558031438\\
10	0.000720193366498625\\
11	0.000280005343723297\\
12	0.000103156040760475\\
13	4.00760003266941e-05\\
14	1.49136007026009e-05\\
15	5.78709833201552e-06\\
16	2.17142134476118e-06\\
17	8.41483882862065e-07\\
18	3.17831859560565e-07\\
19	1.23007923757668e-07\\
20	4.67017727774631e-08\\
21	1.80533248440495e-08\\
22	6.88162986054325e-09\\
23	2.65753186432298e-09\\
24	1.01608655935394e-09\\
25	3.92069540339965e-10\\
26	1.50247364599441e-10\\
27	5.79373697757046e-11\\
28	2.22391551542192e-11\\
29	8.57185143413095e-12\\
30	3.29471956567528e-12\\
31	1.26897896366241e-12\\
32	4.88495723336086e-13\\
33	1.88306315049239e-13\\
34	7.22397249882379e-14\\
35	2.78209003179767e-14\\
36	1.1151288555545e-14\\
37	4.0808192092798e-15\\
38	2.09188894695423e-15\\
39	3.88578058618805e-16\\
40	1.99497218720156e-15\\
41	1.98979322135008e-15\\
42	1.33313414590556e-15\\
43	1.33313414590556e-15\\
44	4.44089209850063e-16\\
45	1.36002320516582e-15\\
46	1.57259307076027e-15\\
47	8.32667268468867e-17\\
48	1.11022302462516e-16\\
49	0\\
50	0\\
};

\addplot [color=mycolor2, line width=1.0pt, dashed]
table[row sep=crcr]{%
1	8.99104678423662\\
2	2.25694732588403\\
3	0.76323490357093\\
4	0.268185641756167\\
5	0.102994273803757\\
6	0.036512495806362\\
7	0.0141662585684595\\
8	0.00508871186911428\\
9	0.00197825580314322\\
10	0.000720193366498773\\
11	0.000280005343722838\\
12	0.000103156040760098\\
13	4.00760003271073e-05\\
14	1.49136007020456e-05\\
15	5.78709833112275e-06\\
16	2.17142134429983e-06\\
17	8.41483882702136e-07\\
18	3.17831859895884e-07\\
19	1.23007923693944e-07\\
20	4.67017727838731e-08\\
21	1.80533245542056e-08\\
22	6.88163012394609e-09\\
23	2.65753153537313e-09\\
24	1.01608698851773e-09\\
25	3.92069950511474e-10\\
26	1.50246568193618e-10\\
27	5.79378359245847e-11\\
28	2.22408416986917e-11\\
29	8.57223825214682e-12\\
30	3.29513990892752e-12\\
31	1.26867723976724e-12\\
32	4.8818692847501e-13\\
33	1.8790456795808e-13\\
34	7.14279423821752e-14\\
35	2.72859192989945e-14\\
36	1.11710661927834e-14\\
37	4.94391494818486e-15\\
38	2.28473718669871e-15\\
39	2.77035270246746e-15\\
40	2.28028536461812e-15\\
41	2.61653074715438e-15\\
42	2.22044604925031e-15\\
43	2.58834159649721e-15\\
44	2.22044604925031e-15\\
45	2.58834159649721e-15\\
46	2.22044604925031e-15\\
47	2.58834159649721e-15\\
48	2.22044604925031e-15\\
49	2.58834159649721e-15\\
50	2.22044604925031e-15\\
};

\addplot[color=mycolor3, line width=1.0pt, dash dot]
table[row sep=crcr]{%
1	6.35718381667918\\
2	2.42434393481551\\
3	1.17335713519342\\
4	0.561347801643783\\
5	0.265704204821509\\
6	0.128861907765114\\
7	0.0629256923712213\\
8	0.0309579232759931\\
9	0.015320695355625\\
10	0.00761279998677837\\
11	0.00379251839776083\\
12	0.00189228919675016\\
13	0.000945026872650664\\
14	0.000472202622761171\\
15	0.000236015747499724\\
16	0.000117984514977891\\
17	5.89859253946143e-05\\
18	2.94912565000332e-05\\
19	1.47451708970579e-05\\
20	7.37246340700107e-06\\
21	3.68619926644621e-06\\
22	1.84309104229181e-06\\
23	9.21543253363966e-07\\
24	4.60771029680458e-07\\
25	2.30385357564613e-07\\
26	1.15192637892926e-07\\
27	5.7596307725718e-08\\
28	2.87981512426956e-08\\
29	1.43990744945999e-08\\
30	7.19953656146427e-09\\
31	3.59976847562339e-09\\
32	1.79988387617566e-09\\
33	8.99941614587271e-10\\
34	4.49970928820978e-10\\
35	2.24985687431792e-10\\
36	1.1249198068697e-10\\
37	5.6246433684188e-11\\
38	2.81230480286907e-11\\
39	1.40611710696371e-11\\
40	7.03013223116215e-12\\
41	3.51579017962878e-12\\
42	1.75795326830631e-12\\
43	8.78856803352496e-13\\
44	4.38792067788051e-13\\
45	2.19850473816317e-13\\
46	1.0970466228674e-13\\
47	5.47740524569476e-14\\
48	2.77118912878533e-14\\
49	1.37786397362254e-14\\
50	6.99057574353386e-15\\
};


\end{semilogyaxis}
\end{tikzpicture}%
			\caption{Agreement error.}
	\end{subfigure}
\begin{subfigure}[b]{0.45\textwidth}
%
%
\definecolor{mycolor1}{rgb}{0.00000,0.44700,0.74100}%
\definecolor{mycolor2}{rgb}{0.85000,0.32500,0.09800}%
\definecolor{mycolor3}{rgb}{0.92900,0.69400,0.12500}%
\definecolor{mycolor4}{rgb}{0.63500,0.07800,0.18400}%
\begin{tikzpicture}[scale=0.9]

\begin{semilogyaxis}[%
width=1.6in,
height=1.4in,
at={(1.011in,0.642in)},
scale only axis,
xmin=0,
xmax=20,
xlabel = {$k$},
ymin=-20,
ymax=30,
ylabel = {$\sum_{i\in\mathcal{V}}||x^i(k)-\bar{x}(0)||$},
legend entries={{\small Algorithm~\ref{alg:resilient}}, {\small Algorithm~\ref{alg:switch}}, {\small Standard algorithm}, {\small W-MSR}},
legend style={legend cell align=left, align=left, draw=white!15!black},
legend pos={outer north east}
]
\addplot[color=mycolor4, line width=1.0pt, dotted]
table[row sep=crcr]{%
1	6.52835676044424\\
2	2.86272809590854\\
3	1.58934261428694\\
4	1.25429483939121\\
5	1.29690843595909\\
6	1.38490061460311\\
7	1.44159114989029\\
8	1.47436809559725\\
9	1.49209438531364\\
10	1.50137981642542\\
11	1.50617103685023\\
12	1.508623032513\\
13	1.50987291589115\\
14	1.51051066751758\\
15	1.51081695813369\\
16	1.5109678479628\\
17	1.51104297917764\\
18	1.51108005074739\\
19	1.51109842854022\\
20	1.51110735357622\\
21	1.51111176365053\\
22	1.51111388054742\\
23	1.51111492086424\\
24	1.51111541585429\\
25	1.51111564872774\\
26	1.51111575009339\\
27	1.51111579005633\\
28	1.51111580420875\\
29	1.5111158099576\\
30	1.5111158113564\\
31	1.51111581205071\\
32	1.5111158121407\\
33	1.51111581223348\\
34	1.51111581223066\\
35	1.51111581224066\\
36	1.51111581223639\\
37	1.51111581223685\\
38	1.51111581223552\\
39	1.51111581223527\\
40	1.5111158122349\\
41	1.51111581223466\\
42	1.51111581223458\\
43	1.51111581223453\\
44	1.51111581223451\\
45	1.5111158122345\\
46	1.5111158122345\\
47	1.5111158122345\\
48	1.5111158122345\\
49	1.5111158122345\\
50	1.5111158122345\\
};

\addplot [color=mycolor1, line width=1.0pt]
table[row sep=crcr]{%
1	6.52835676044424\\
2	1.70001502625727\\
3	0.595081474383987\\
4	0.230925395669012\\
5	0.0888699804999051\\
6	0.0344893655606191\\
7	0.0133269145886157\\
8	0.00515119652601112\\
9	0.00198815182382849\\
10	0.000766795236166358\\
11	0.00029545408193201\\
12	0.000113746744714303\\
13	4.37546027834973e-05\\
14	1.68177513096523e-05\\
15	6.45933032657267e-06\\
16	2.47914853100359e-06\\
17	9.50893554373175e-07\\
18	3.64497445930528e-07\\
19	1.39639526978166e-07\\
20	5.3467618150653e-08\\
21	2.04624778249006e-08\\
22	7.82755040478699e-09\\
23	2.99301041728049e-09\\
24	1.14398183250641e-09\\
25	4.37089621529681e-10\\
26	1.66945388479178e-10\\
27	6.37440775050554e-11\\
28	2.4331920426083e-11\\
29	9.28589825265209e-12\\
30	3.5428894368276e-12\\
31	1.3513714191137e-12\\
32	5.15846308233749e-13\\
33	1.96916014145106e-13\\
34	7.58160670609961e-14\\
35	2.92761501245397e-14\\
36	1.17380024981305e-14\\
37	5.47079355579994e-15\\
38	4.04170299099794e-15\\
39	2.973897742887e-15\\
40	2.88272997129038e-15\\
41	2.86906250128822e-15\\
42	2.88272997129038e-15\\
43	2.89639744129254e-15\\
44	2.89639744129254e-15\\
45	2.89639744129254e-15\\
46	2.89639744129254e-15\\
47	2.89639744129254e-15\\
48	2.89639744129254e-15\\
49	2.89639744129254e-15\\
50	2.89639744129254e-15\\
};

\addplot [color=mycolor2, line width=1.0pt, dashed]
table[row sep=crcr]{%
1	6.52835676044424\\
2	1.70001502625727\\
3	0.595081474383987\\
4	0.230925395669012\\
5	0.0888699804999058\\
6	0.0344893655606193\\
7	0.0133269145886158\\
8	0.00515119652601095\\
9	0.00198815182382873\\
10	0.000766795236166332\\
11	0.000295454081931927\\
12	0.000113746744713706\\
13	4.37546027831419e-05\\
14	1.68177513095607e-05\\
15	6.45933032657267e-06\\
16	2.47914853110579e-06\\
17	9.50893554524473e-07\\
18	3.64497445740654e-07\\
19	1.39639526797371e-07\\
20	5.34676175289798e-08\\
21	2.04624772491625e-08\\
22	7.82755017783203e-09\\
23	2.993010485665e-09\\
24	1.14398049996046e-09\\
25	4.37088969102616e-10\\
26	1.66945368705819e-10\\
27	6.37444867412786e-11\\
28	2.43321497191473e-11\\
29	9.28579919674697e-12\\
30	3.54266316637239e-12\\
31	1.35145836725693e-12\\
32	5.15077471127539e-13\\
33	1.97110886268318e-13\\
34	7.57234310776866e-14\\
35	2.86246373065054e-14\\
36	1.14838897024985e-14\\
37	4.52840076414809e-15\\
38	2.52908359533844e-15\\
39	2.55641853534276e-15\\
40	2.48414638367747e-15\\
41	2.4590637028824e-15\\
42	2.4590637028824e-15\\
43	2.4590637028824e-15\\
44	2.4590637028824e-15\\
45	2.4590637028824e-15\\
46	2.4590637028824e-15\\
47	2.4590637028824e-15\\
48	2.4590637028824e-15\\
49	2.4590637028824e-15\\
50	2.4590637028824e-15\\
};

\addplot[color=mycolor3, line width=1.0pt, dash dot]
table[row sep=crcr]{%
1	6.52835676044424\\
2	2.47216745789169\\
3	1.1703634888813\\
4	0.628393844944701\\
5	0.399320635119882\\
6	0.378019490010352\\
7	0.395041274886357\\
8	0.409856509116041\\
9	0.418620481034159\\
10	0.423199152509131\\
11	0.425517603721295\\
12	0.426681091346566\\
13	0.427263357753198\\
14	0.427554509513192\\
15	0.427700064787574\\
16	0.427772831298481\\
17	0.427809210320019\\
18	0.427827398414357\\
19	0.42783649201857\\
20	0.427841038687806\\
21	0.427843311983683\\
22	0.427844448620557\\
23	0.427845016935882\\
24	0.427845301092681\\
25	0.427845443170843\\
26	0.427845514209859\\
27	0.42784554972935\\
28	0.427845567489091\\
29	0.42784557636896\\
30	0.427845580808894\\
31	0.427845583028861\\
32	0.427845584138845\\
33	0.427845584693836\\
34	0.427845584971332\\
35	0.42784558511008\\
36	0.427845585179455\\
37	0.427845585214141\\
38	0.427845585231485\\
39	0.427845585240157\\
40	0.427845585244493\\
41	0.427845585246661\\
42	0.427845585247744\\
43	0.427845585248286\\
44	0.427845585248557\\
45	0.427845585248693\\
46	0.427845585248761\\
47	0.427845585248794\\
48	0.427845585248811\\
49	0.42784558524882\\
50	0.427845585248824\\
};

\end{semilogyaxis}
\end{tikzpicture}%
	\caption{Error to the average of initial states.}
\end{subfigure}
\caption{\modify{Performance comparison of different algorithms in the absence of faulty agent, where $\bar{x}(k) \triangleq \frac{1}{N}\sum_{j\in\mathcal{V}} x^j(k).$}}
\label{fig:woattack}
\end{figure}

\begin{figure}[!htbp]
	\centering
%
%
\definecolor{mycolor1}{rgb}{0.00000,0.44700,0.74100}%
\definecolor{mycolor2}{rgb}{0.85000,0.32500,0.09800}%
\definecolor{mycolor3}{rgb}{0.92900,0.69400,0.12500}%
\definecolor{mycolor4}{rgb}{0.49400,0.18400,0.55600}%
\begin{tikzpicture}[scale=0.9]

\begin{semilogyaxis}[%
width=1.6in,
height=1.4in,
at={(1.011in,0.642in)},
scale only axis,
xmin=0,
xmax=20,
xlabel = {$k$},
ymin=-20,
ymax=30,
ylabel = {$\sum_{i\in\mathcal{B}}||x^i(k)-\bar{x}(k)||$},
legend entries={{\small Algorithm~\ref{alg:resilient}}, {\small Algorithm~\ref{alg:switch}}, {\small Standard algorithm}, {\small W-MSR}},
legend style={legend cell align=left, align=left, draw=white!15!black},
legend pos={outer north east}
]
\addplot [color=mycolor4, line width=1.0pt,dotted]
table[row sep=crcr]{%
1	6.35718381667918\\
2	2.90253709429376\\
3	1.36052015245573\\
4	0.722957869497717\\
5	0.381223712313243\\
6	0.195951693940609\\
7	0.0963930469364378\\
8	0.0460119493439514\\
9	0.0216774960499969\\
10	0.0102793191907446\\
11	0.00509597177558195\\
12	0.00258319729509384\\
13	0.00129563747431363\\
14	0.000655382663227683\\
15	0.000279836772594724\\
16	0.000119670276768392\\
17	5.60506029146397e-05\\
18	2.66614970219634e-05\\
19	1.14536607026548e-05\\
20	5.07172293862533e-06\\
21	2.25413685915804e-06\\
22	1.00803221248919e-06\\
23	4.95051497323074e-07\\
24	2.52033546781178e-07\\
25	1.14921380567472e-07\\
26	6.54710856772556e-08\\
27	2.84848869116449e-08\\
28	1.52906937324955e-08\\
29	6.77847021116888e-09\\
30	4.19939131290465e-09\\
31	2.72904978231213e-09\\
32	1.9411803824512e-09\\
33	1.75122209538856e-09\\
34	1.41823913869471e-09\\
35	1.92802381037768e-09\\
36	1.52909376996733e-09\\
37	1.9212452701873e-09\\
38	2.02103828108744e-09\\
39	1.91785560004399e-09\\
40	1.82374330750878e-09\\
41	1.73777175147895e-09\\
42	1.7235928717258e-09\\
43	1.78127013617624e-09\\
44	1.8569213609009e-09\\
45	1.92034625833335e-09\\
46	1.97116312313658e-09\\
47	2.41903893752983e-09\\
48	1.86332126080125e-09\\
49	1.76768908480789e-09\\
50	1.0548595023098e-09\\
};

\addplot [color=mycolor1, line width=1.0pt]
table[row sep=crcr]{%
1	6.35718381667918\\
2	1.65038566772711\\
3	0.817104127010597\\
4	0.41040657892003\\
5	0.206700286396759\\
6	0.104700928769154\\
7	0.0508014815014124\\
8	0.0259503722354146\\
9	0.012513500738931\\
10	0.00643816117166044\\
11	0.00311701290030871\\
12	0.00160652929821038\\
13	0.000793417423292398\\
14	0.00040167372683285\\
15	0.000211139410041565\\
16	0.000103921263204809\\
17	5.58057099445485e-05\\
18	2.85296145816914e-05\\
19	1.43745303070924e-05\\
20	7.13589980988083e-06\\
21	3.44404195364757e-06\\
22	1.6913521996981e-06\\
23	8.33041217038669e-07\\
24	4.18593824083596e-07\\
25	2.05273928376414e-07\\
26	9.91790152544919e-08\\
27	4.94480557899667e-08\\
28	2.64771293821422e-08\\
29	1.40562175625647e-08\\
30	7.4144455659767e-09\\
31	3.93874977682747e-09\\
32	3.04503277884326e-09\\
33	1.70829193203617e-09\\
34	1.54786712084416e-09\\
35	1.38773114246673e-09\\
36	1.08226243940863e-09\\
37	1.29364429199119e-09\\
38	1.82960034880429e-09\\
39	2.00694938802532e-09\\
40	1.94731018947576e-09\\
41	1.91582945911982e-09\\
42	2.00101766215416e-09\\
43	2.14329667936661e-09\\
44	2.22406106956886e-09\\
45	2.18598675123492e-09\\
46	2.14990313430767e-09\\
47	2.55448433312293e-09\\
48	2.30303301873014e-09\\
49	2.15401611537866e-09\\
50	1.23573441549393e-09\\
};

\addplot [color=mycolor2, line width=1.0pt, dashed]
table[row sep=crcr]{%
1	6.35718381667918\\
2	1.39390639138735\\
3	0.379169585969789\\
4	0.279725743273293\\
5	0.271701006706112\\
6	0.261308660907912\\
7	0.246230806927192\\
8	0.224141758796896\\
9	0.193336225620132\\
10	0.153368754829082\\
11	0.105008104013537\\
12	0.0499767679535515\\
13	0.0139177267033263\\
14	0.0729892659054955\\
15	0.134691177614443\\
16	0.193476985596444\\
17	0.246769961409121\\
18	0.29227534721132\\
19	0.328035872191955\\
20	0.352506708897677\\
21	0.364618367668793\\
22	0.363823024378445\\
23	0.350124239372612\\
24	0.324094008834402\\
25	0.286890054645624\\
26	0.240314318492183\\
27	0.187066389107514\\
28	0.131943480296148\\
29	0.0885551341378436\\
30	0.0932951473983063\\
31	0.140243019975157\\
32	0.195149014591729\\
33	0.246805726593848\\
34	0.290951842624745\\
35	0.325099753529901\\
36	0.347600062915091\\
37	0.357440971139226\\
38	0.354205294995071\\
39	0.338064520939857\\
40	0.309783486297856\\
41	0.270746038872042\\
42	0.223067492058693\\
43	0.170070302199042\\
44	0.118535975524968\\
45	0.0891182207157962\\
46	0.111600030811892\\
47	0.161769266750252\\
48	0.215305250028743\\
49	0.264297337297747\\
50	0.305119278934343\\
};

\addplot[color=mycolor3, line width=1.0pt, dash dot]
table[row sep=crcr]{%
1	9.34702963534944\\
2	4.35013604515018\\
3	2.12566333359546\\
4	1.08914678058801\\
5	0.561291561660367\\
6	0.286634232297884\\
7	0.145236632721887\\
8	0.0732005003864608\\
9	0.036770728411746\\
10	0.0184340898703906\\
11	0.00923072951915613\\
12	0.00461915764895945\\
13	0.00231061931237306\\
14	0.00115559273738277\\
15	0.000577872862125216\\
16	0.000256082246036108\\
17	0.000137267852126007\\
18	7.1978171692766e-05\\
19	3.69448349660049e-05\\
20	1.8725937336567e-05\\
21	9.42814777115322e-06\\
22	4.73059203201444e-06\\
23	2.36945337719705e-06\\
24	1.18576950116871e-06\\
25	5.93145886825617e-07\\
26	2.9663828190511e-07\\
27	1.48335482256279e-07\\
28	7.41718269561802e-08\\
29	3.70869357839991e-08\\
30	1.85437230070504e-08\\
31	9.27192583670468e-09\\
32	4.63597892675102e-09\\
33	2.31799288596147e-09\\
34	1.13737309377385e-09\\
35	5.64705809773096e-10\\
36	2.81554822462564e-10\\
37	1.40604352228293e-10\\
38	7.02619807577562e-11\\
39	3.51216820471872e-11\\
40	1.75587861716604e-11\\
41	8.77928863308337e-12\\
42	4.38893800044864e-12\\
43	2.1947424760517e-12\\
44	1.09684437322261e-12\\
45	5.48826829250281e-13\\
46	2.74154717956994e-13\\
47	1.38491787763567e-13\\
48	6.98854595052527e-14\\
49	3.46166371556879e-14\\
50	1.7793408550121e-14\\
};

\end{semilogyaxis}
\end{tikzpicture}%
	\caption{\modify{Performance comparison of different algorithms in the presence of faulty agent, where $\bar{x}(k) \triangleq \frac{1}{|\mathcal{B}|}\sum_{j\in\mathcal{B}} x^j(k).$}}
	\label{fig:wattack}
\end{figure}

%

\subsection{A $3$-dimensional example}
To further evaluate the computational complexity, we perform Algorithm~\ref{alg:resilient} in a $3$D system. It is assumed that $8$ agents cooperate through a complete graph with one of the agents being faulty. According to \cite{leblanc2013resilient}, this network is $(4,2)$-robust. Therefore, it is able to tolerate this single misbehaving agent as claimed in Theorem~\ref{thm:converge} and also verified in Fig.~\ref{fig:3d}. On a Macbook Pro with a 2.6 GHz processor, at every time step, each agent takes around $2$ seconds to make an update by calculating the auxiliary point using CVX (\cite{grant2014cvx}). As analyzed in Sections~\ref{sec:middle_points} and \ref{sec:compare}, the existing algorithms will inevitably introduce higher cost in this complete graph where each agent has a large neighborhood, since the calculation therein increases exponentially with the number of in-neighbors. 

It is also noted that our computation will be super fast if there are no faulty nodes, since Algorithm~\ref{alg:switch} performs exactly as a standard average consensus algorithm in this scenario.

\begin{figure}[!htbp]
	\centering
	\input{tikz/3dnew.tikz}
	\caption{\revise{The trajectory of local states under Algorithm~\ref{alg:resilient}, where the area within the blue polyhedron is the convex hull formed by the initial states of benign agents.}}
	\label{fig:3d}
\end{figure}

\section{Conclusion}\label{sec:conclusion}
Due to its wide applications, the distributed coordination in networked systems has attracted much research interest. In this paper, we are interested in the achievement of resilient vector consensus under misbehaving agents. We propose a general framework of consensus algorithms and provide verifiable conditions for checking the resiliency of it. The lower bound on convergence rate is also given. Based on the derived conditions, two specific algorithms are then developed. As the consensus arguably forms the foundation for distributed computing, the results in this paper provides solid foundations for future works to develop resilient coordination protocols in other consensus-based problems.


\bibliographystyle{agsm} 
\bibliography{reference} 

\appendix
\section{Proof of Theorem \ref{thm:resilientCon}}\label{app:A}
In order to show that benign agents exponentially achieve the resilient consensus, we shall respectively prove the $\delta$-validity and agreement conditions.

\textbf{\revise{Part I: Achievement of $\delta$-validity condition}} 

The $\delta$-validity condition is guaranteed by the following results: 
\begin{lemma}\label{lmm:H_i}
	Under the conditions of Theorem \ref{thm:resilientCon}, it holds at any time $k$ that: 
	\begin{equation}
	\Xi(k+1) \subset \Xi(k) + \mathcal{B}_{\delta(k)},
	\end{equation}
	where $\Xi(k)$ is defined in \eqref{eqn:H_i}.
\end{lemma}
\vspace{-10pt}
\begin{pf*}{Proof.}
	Notice that $\Xi(k)$ is convex at any time. Given C1 and C3, the below statement is clear for any benign agent $i\in\mathcal{B}$:
	\begin{equation}\label{eqn:H}
	x^i(k+1) \in \{x:||x-\chi|| \leq \delta(k), \exists \chi\in \Xi(k)\},
	\end{equation}
	where $\delta(k)$ is defined in \eqref{eqn:deltadef}. The proof is completed. 
	\hfill$\square$
\end{pf*}

\begin{proposition}\label{thm:validity}
	Under the conditions of Theorem \ref{thm:resilientCon}, the $\delta$-validity condition is guaranteed by \eqref{eqn:update_benign}.
\end{proposition}
\begin{pf*}{Proof.}
	It follows that 
	\begin{equation}
	\sum_{k=0}^{\infty} \delta(k) <\sum_{i\in\mathcal{B}} \Big(\sum_{k=0}^{\infty} ||\varepsilon^i(k)||\Big)<\infty, 
	\end{equation}
	where the last inequality holds by \eqref{eqn:varepsilon}. The proof is thus trivial by evoking Lemma~\ref{lmm:H_i}. 
	\hfill$\square$
\end{pf*}

\textbf{\revise{Part II: Achievement of agreement condition}} 

In order to establish the agreement condition, it is equivalent to show that benign agents reach a consensus at any dimension. Due to the symmetry between different dimensions, without loss of generality, we would only focus on the first entry of local states. 
To this end, we first prove \eqref{eqn:delta}, namely:
\begin{lemma}\label{lmm:gap}
	Under the conditions of Theorem \ref{thm:resilientCon}, it holds at any $k\in\mathbb{Z}_{\geq 0}$ that:
	\begin{equation*}
	\Delta_1(k+|\mathcal{B}|)\leq \bigg(1-\frac{(0.5\alpha)^{|\mathcal{B}|}}{2}\bigg)\Delta_1(k)+2\sum_{\tau=k}^{k+|\mathcal{B}|-1}\delta(\tau).
	\end{equation*}
\end{lemma}
\begin{pf*}{Proof.}
	As a direct result of Lemma~\ref{lmm:H_i}, it yields for any $\ell\in\mathcal{D}$ that
	\begin{equation}\label{eqn:M}
	\begin{split}
	M^\mathcal B_\ell(k+1)&\leq M^\mathcal B_\ell(k)+\delta(k),\\
	m^\mathcal B_\ell(k+1)&\geq m^\mathcal B_\ell(k)-\delta(k),
	\end{split}
	\end{equation}
	where $M^\mathcal B_\ell(k)$ and $m^\mathcal B_\ell(k)$ are given in \eqref{eqn:M(k),m(k)}.
	
	Then suppose that at time $k$, it holds that $M^\mathcal B_1(k)\neq m^\mathcal B_1(k)$, i.e., $\Delta_1(k)>0$. Let us define $\epsilon_0 = \Delta_1(k)/2$. It is trivial to conclude that $\mathcal{B}_1^M(k,M^\mathcal B_1(k)-\epsilon_0)$ and $\mathcal{B}_1^m(k,m^\mathcal B_1(k)+\epsilon_0)$ are both disjoint and nonempty. Hence, a benign agent exists, labeled as $j$, such that either Condition 4a) or 4b) holds.
	
	Without loss of generality, let $j\in \mathcal{B}_1^M(k,M^\mathcal B_1(k)-\epsilon_0)$ be such an agent. Therefore, there exists a point in $\Lambda^j(k)$, the first entry of which is upper bounded by $M^\mathcal B_1(k)-\epsilon_0$. We hence conclude that $m_1(\Lambda^j(k))\leq M^\mathcal B_1(k)-\epsilon_0$. In view of C3, it follows that
	$M_1(\Lambda^j(k)) \leq M^\mathcal B_1(k).$
	Since $\tilde{x}^j(k)=\cen(\Lambda^j(k))$, we obtain that 
	\begin{equation}
	\begin{split}
	\tilde{x}_1^j(k) &\leq 0.5 M^\mathcal B_1(k) + 0.5(M^\mathcal B_1(k)-\epsilon_0)\\
	& = M^\mathcal B_1(k) -0.5 \epsilon_0.
	\end{split}
	\end{equation}
	Consequently, it follows that
	\begin{equation}\label{eqn:upperbound}
	\begin{split}
	x^j_1(k+1) &=a^j x_1^j(k) + (1-a^j) \tilde{x}_1^j(k) + \varepsilon_1^j(k)\\
	&\leq (1-\alpha) M^\mathcal B_1(k) + \alpha[M^\mathcal B_1(k) - 0.5 \epsilon_0]+ \delta(k)\\&=M^\mathcal B_1(k)-0.5\alpha\epsilon_0+ \delta(k).
	\end{split}
	\end{equation}
	Moreover, for any benign node $i \in \mathcal{V}\backslash\mathcal{V}_1^M(k, M^\mathcal B_1(k)-\epsilon_0)$, it holds that
	\begin{equation*}
	\begin{split}
	x^i_1(k+1) &=a^i x_1^i(k) + (1-a^i) \tilde{x}_1^i(k) + \varepsilon_1^i(k)\\
	&\leq \alpha [M^\mathcal B_1(k)-\epsilon_0] + (1-\alpha) M^\mathcal B_1(k) + \delta(k)\\&=M^\mathcal B_1(k)-\alpha\epsilon_0+ \delta(k).
	\end{split}
	\end{equation*}
	Therefore, the upper bound in \eqref{eqn:upperbound} also applies to any benign agent in $\mathcal{V}\backslash\mathcal{V}_1^M(k,M^\mathcal B_1(k)-\epsilon_0)$.
	
	Similarly, if $j\in \mathcal{B}_1^m(k,m^\mathcal B_1(k)+\epsilon_0)$,  and $\Lambda^j(k)$ contains some point with first entry lower bounded by $\epsilon_2$, we obtain that $\tilde{x}_1^j(k) \geq m^\mathcal B_1(k) + 0.5 \epsilon_0$ and have an analogous result that $x^j_1(k+1)\geq m^\mathcal B_1(k)+0.5\alpha\epsilon_0- \delta(k),$ which again, is the lower bound for every benign agent in $\mathcal{V}\backslash\mathcal{V}_1^m(k,m^\mathcal B_1(k)+\epsilon_0)$.
	
	\revise{Define $\epsilon_1=0.5\alpha\epsilon_0-\delta(k)<\epsilon_0$. Let us consider the sets $\mathcal{B}_1^M(k+1,M^\mathcal B_1(k)-\epsilon_1)$ and $\mathcal{B}_1^m(k+1,m^\mathcal B_1(k)+\epsilon_1)$. From former discussions, we know that one of the following statements must be true:
		\begin{enumerate}
			\item At least one benign agent in $\mathcal{B}_1^M(k,M^\mathcal B_1(k)-\epsilon_0)$ has its first component decreasing to below (or equal) $M^\mathcal B_1(k)-\epsilon_1$, which leads to $\mathcal{B}_1^M(k+1,M^\mathcal B_1(k)-\epsilon_1)\subsetneq\mathcal{B}_1^M(k,M^\mathcal B_1(k)-\epsilon_0)$;
			\item At least one benign agent in $\mathcal{B}_1^m(k,m^\mathcal B_1(k)+\epsilon_0)$ has its first component increasing to above (or equal) $m^\mathcal B_1(k)+\epsilon_1$, which leads to $\mathcal{B}_1^m(k+1,m^\mathcal B_1(k)+\epsilon_1)\subsetneq\mathcal{B}_1^m(k,m^\mathcal B_1(k)+\epsilon_0)$.
		\end{enumerate}
		We therefore conclude that $|\mathcal{B}_1^M(k+1,M^\mathcal B_1(k)-\epsilon_1)|+|\mathcal{B}_1^m(k+1,m^\mathcal B_1(k)+\epsilon_1)|<|\mathcal{B}_1^M(k,M^\mathcal B_1(k)-\epsilon_0)|+|\mathcal{B}_1^m(k,m^\mathcal B_1(k)+\epsilon_0)|.$}
	
	Then let us consider the update at $k+2$. Suppose that $\mathcal{B}_1^M(k+1,M^\mathcal B_1(k)-\epsilon_1)$ and $\mathcal{B}_1^m(k+1,m^\mathcal B_1(k)+\epsilon_1)$ are still disjoint. Again, there exists an agent $l$ such that either 4a) or 4b) holds. Suppose 4a) is true. Similarly, one has that
	\begin{equation}
	\begin{split}
	\tilde{x}^l_1(k+1) &\leq 0.5 M^\mathcal B_1(k+1)+0.5(M^\mathcal B_1(k)-\epsilon_1)\\
	& \leq 0.5(M^\mathcal B_1(k)+\delta(k))+0.5(M^\mathcal B_1(k)-\epsilon_1)\\
	&= M^\mathcal B_1(k)+0.5 \delta(k) - 0.5\epsilon_1.
	\end{split}
	\end{equation}
	Combining it with \eqref{eqn:M} yields that
	\begin{equation}
	\begin{split}
	x_1^l(k+2) &=a^l x_1^l(k+1) + (1-a^l) \tilde{x}_1^l(k+1) + \varepsilon_1^l(k+1)\\
	&\leq (1-\alpha)[M^\mathcal B_1(k)+\delta(k)]+ \alpha [M^\mathcal B_1(k)+0.5 \delta(k) \\&\qquad- 0.5\epsilon_1] + \delta(k+1)\\
	&=M^\mathcal B_1(k)-(0.5\alpha)^2\epsilon_0+ \delta(k)+\delta(k+1)\\
	&=M^\mathcal B_1(k)-\epsilon_2,
	\end{split}
	\end{equation}
	where $\epsilon_2=(0.5\alpha)^2\epsilon_0- \delta(k)-\delta(k+1)$. The above upper bound also holds for any benign node $r\in\mathcal{V}\backslash\mathcal{V}_1^M(k+1,M^\mathcal B_1(k)-\epsilon_1)$. To see this, one considers the update of agent $r$, namely,
	\begin{equation*}
	\begin{split}
	x^r_1(k+2) &=a^r x_1^r(k+1) + (1-a^r) \tilde{x}_1^r(k+1) + \varepsilon_1^r(k+1)\\
	&\leq \alpha [M^\mathcal B_1(k)-\epsilon_1]  + (1-\alpha) [M^\mathcal B_1(k)+\delta(k)] \\&\qquad+ \delta(k+1)\\
	&=M^\mathcal B_1(k)-\alpha\epsilon_1+(1-\alpha) \delta(k)+\delta(k+1)\\
	&=M^\mathcal B_1(k)-0.5\alpha^2\epsilon_0+ \delta(k)+\delta(k+1)\\
	&\leq M^\mathcal B_1(k)-(0.5\alpha)^2\epsilon_0+ \delta(k)+\delta(k+1)\\
	&=M^\mathcal B_1(k)-\epsilon_2.
	\end{split}
	\end{equation*}
	
	On the other hand, if $l\in\mathcal{B}_1^m(k+1,m^\mathcal B_1(k)+\epsilon_1)$, then $\Lambda^l(k+1)$ must contain some point with first entry lowered bounded by $m^\mathcal B_1(k)+\epsilon_1$. Following the previous arguments gives that $x_1^l(k+2)\geq m^\mathcal B_1(k)+\epsilon_2$, which is also the lower bound for every benign node in $\mathcal{V}\backslash\mathcal{V}_1^m(k+1,m^\mathcal B_1(k)+\epsilon_1)$. Hence, it must be true that, either $\mathcal{B}_1^M(k+2, M^\mathcal B_1(k)-\epsilon_2)\subsetneq\mathcal{B}_1^M(k+1,M^\mathcal B_1(k)-\epsilon_1),$ or $\mathcal{B}_1^m(k+2,m^\mathcal B_1(k)+\epsilon_2)\subsetneq\mathcal{B}_1^m(k+1,m^\mathcal B_1(k)+\epsilon_1),$ or both.  
	
	Therefore, for any $t \geq 1$, we can define $\epsilon_t=(0.5\alpha)^t\epsilon_0- \sum_{\tau=k}^{k+t-1}\delta(\tau)$. As long as both $\mathcal{B}_1^M(k+t,M^\mathcal B_1(k)-\epsilon_{t})$ and $\mathcal{B}_1^m(k+t,m^\mathcal B_1(k)+\epsilon_{t})$ are nonempty, we can repeat the above analysis and conclude that at least one of these two sets will shrink at the next time step. \revise{That is, for any $t\in\mathbb{Z}_{\geq 0}$, either $|\mathcal{B}_1^M(k+t+1,M^\mathcal B_1(k)-\epsilon_{t+1})|<|\mathcal{B}_1^M(k+t,M^\mathcal B_1(k)-\epsilon_{t})|$, or $|\mathcal{B}_1^m(k+t+1,m^\mathcal B_1(k)+\epsilon_{t+1})|<|\mathcal{B}_1^m(k+t,m^\mathcal B_1(k)+\epsilon_{t})|$, or both. }
	
	\revise{Since both $|\mathcal{B}_1^M(k,M^\mathcal B_1(k)-\epsilon_0)|$ and $|\mathcal{B}_1^m(k,m^\mathcal B_1(k)+\epsilon_0)|$ are upper bounded by $|\mathcal{B}|$, one of these sets would be empty after $|\mathcal{B}|$ steps. Namely, one of following statements must be true:
		\begin{enumerate}
			\item $\mathcal{B}_1^M(k+|\mathcal{B}|,M^\mathcal B_1(k)-\epsilon_{|\mathcal{B}|})=\varnothing,$
			\item $\mathcal{B}_1^m(k+|\mathcal{B}|,m^\mathcal B_1(k)+\epsilon_{|\mathcal{B}|})=\varnothing$
		\end{enumerate}
		Without loss of generality, we assume that the first statement holds.} By \eqref{eqn:V^MandV^m}--\eqref{eqn:B^MandB^m}, at time $k+|\mathcal{B}|$, all the fault-free agents have their first elements upper bounded by $M^\mathcal B_1(k)-\epsilon_{|\mathcal{B}|}$, i.e., 
	$M^\mathcal B_1(k+|\mathcal{B}|)\leq M^\mathcal B_1(k)-\epsilon_{|\mathcal{B}|}.$ On the other hand, from \eqref{eqn:M}, we have
	$m^\mathcal B_1(k+|\mathcal{B}|)\geq m^\mathcal B_1(k)-\sum_{\tau=k}^{k+|\mathcal{B}|-1}\delta(\tau).$
	Therefore, one concludes that                                                                                           
	\begin{equation*}
	\begin{split}
	\Delta_1(k+|\mathcal{B}|)&\leq \Delta_1(k)-\epsilon_{|\mathcal{B}|}+\sum_{\tau=k}^{k+|\mathcal{B}|-1}\delta(\tau)\\&=
	\bigg(1-\frac{(0.5\alpha)^{|\mathcal{B}|}}{2}\bigg)\Delta_1(k)+2\sum_{\tau=k}^{k+|\mathcal{B}|-1}\delta(\tau).
	\end{split}
	\end{equation*}
	The proof is thus completed. \hfill$\square$
\end{pf*}

In order to show that $\Delta_1(k)$ asymptotically approaches $0$, we introduce the lemma below:
\revise{
	\begin{lemma}[\hspace{1pt}{\cite[Lemma~7]{nedic2010constrained}}]\label{lmm:nedic}
		Let $0<\beta<1$ and let $\left\{\gamma_{k}\right\}$ be a positive scalar sequence. Assume that $\lim _{k \rightarrow \infty} \gamma_{k}=0 .$ Then
		$$
		\lim _{k \rightarrow \infty} \sum_{\ell=0}^{k} \beta^{k-\ell} \gamma_{\ell}=0.
		$$
	\end{lemma}
}

We therefore conclude the following result on achieving the agreement condition of resilient consensus:

\begin{proposition}\label{thm:consensus}
	Under the conditions of Theorem \ref{thm:resilientCon}, the agreement condition is exponentially guaranteed by \eqref{eqn:update_benign}.
\end{proposition}
\begin{pf*}{Proof.}
	Again, we consider the first component of states. Let us fix time $k$. In view of \eqref{eqn:delta}, for any $t\in\mathbb{Z}_{\geq 0}$, it holds that
	\begin{equation}\label{eqn:error}
	\begin{split}
	\Delta_1&(k+t|\mathcal{B}|)\leq \bigg(1-\frac{(0.5\alpha)^{|\mathcal{B}|}}{2}\bigg)^t\Delta_1(k)\\&+2\sum_{l=0}^{t-1}\bigg(1-\frac{(0.5\alpha )^{|\mathcal{B}|}}{2}\bigg)^{t-1-l}\sum_{\tau=k+(t-1-l)|\mathcal{B}|}^{k+(t-l)|\mathcal{B}|-1}\delta(\tau).
	\end{split}
	\end{equation}
	Due to \eqref{eqn:varepsilon}, we conclude that $$\lim_{t\rightarrow \infty}\sum_{\tau=k+(t-1-l)|\mathcal{B}|}^{k+(t-l)|\mathcal{B}|-1}\delta(\tau)=0.$$ In view of Lemma~\ref{lmm:nedic}, the second term in RHS of \eqref{eqn:error} goes to $0$ as $t\to\infty$. Therefore, it follows that
	\begin{equation}
	\lim_{t\rightarrow \infty}\Delta_1(k+t|\mathcal{B}|)=0.
	\end{equation}
	Since the above equation holds true for each $k\in\mathbb{N}$, one concludes that
	\begin{equation}
	\lim\limits_{k\to\infty}	\Delta_1(k) = 0.
	\end{equation}
	Given the symmetry of different dimensions, the proof is thus completed.
	\hfill$\square$
\end{pf*}

\textbf{\revise{Part III: Conclusion on the proof of Theorem~\ref{thm:resilientCon}}} 

Combining Part I and Part II, we finally complete the proof of Theorem~\ref{thm:resilientCon}. Specifically, the statements in Theorem~\ref{thm:resilientCon} are verified as follows:

(1) In view of Propositions~\ref{thm:validity} and \ref{thm:consensus}, both the $\delta$-validity and agreement conditions are guaranteed by C1--C4. Recalling Definition~\ref{def:resilientCon}, we conclude that the benign agents exponentially achieve the resilient consensus. 

(2) The second statement has been proven in Lemma~\ref{lmm:gap}.

\section{Proof of Corollary \ref{lm:existence}}
Corollary~\ref{lm:existence} is obvious when $n=0$. Thus we only focus on the scenario when $n\geq 1$.

According to Definition~\ref{def: intersection}, $\varPsi (\mathcal{A}, n)$ is an intersection of $\binom{m}{n}$ convex hulls. Since $m\geq n(d+1)+1$, it is trivial to prove that $\binom{m}{n}>d$ holds.

On the other hand, each of these convex hulls is created by excluding $n$ elements of $\mathcal{A}$. Then consider any $d+1$ of them, they discard at most $n(d+1)$ points in all. Since $m\geq n(d+1)+1$, it must be the case that at least one point in $\mathcal{A}$ is retained by all of them. This indicates that any $d+1$ convex hulls must have a nonempty intersection.
By applying Helly's Theorem, the proof is completed.

\section{Proof of Lemma~\ref{lmm:convexhull}}
We shall prove the statements of Lemma~\ref{lmm:convexhull} in order.

(1) To prove the first statement, we need the following lemmas:

\begin{lemma}\label{pro:subset}
	Consider two collections of sets $\left\{A_i\right\}_{i\in I}$ and $\left\{B_j\right\}_{j\in \mathcal J}$. If for any $j\in \mathcal J$, there exists an $i^*\in \mathcal I$ such that $A_{i^*}\subset B_j$, then
	\begin{align*}
		\bigcap_{i\in\mathcal I}A_i \subset\bigcap _{j\in\mathcal J} B_j .
	\end{align*}
\end{lemma}
\begin{pf*}{Proof.}
	Denote a subset of $\left\{A_i\right\}_{i\in I}$ as $\left\{A_{i^*}\right\}_{i^*\in I}$, such that $\left\{A_{i^*}\right\}_{i^*\in I}$ contains all $A_{i^*}$ which has a superset in $\left\{B_j\right\}_{j\in \mathcal J}$.
	The proof is then completed by noticing that 
	\begin{align*}
		\bigcap_{i\in\mathcal I}A_i \subset \bigcap_{i^*\in\mathcal I}A_{i^*} \subset \bigcap _{j\in\mathcal J} B_j .
	\end{align*}
	\hfill$\square$
\end{pf*}

\begin{lemma}\label{lm: subset}
	Consider any set $\mathcal{A}_1$ with cardinality $m_1$ and $\mathcal{A}_2$ with cardinality $m_2$. If $\mathcal{A}_1 \subset \mathcal{A}_2$, then for any $n\leq m_1$, the following statement holds: $$\varPsi (\mathcal{A}_1, n) \subset \varPsi (\mathcal{A}_2, n).$$
\end{lemma}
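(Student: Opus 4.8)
The plan is to reduce the claim to Lemma~\ref{pro:subset} by exhibiting, for every convex hull appearing in the intersection defining $\varPsi(\mathcal{A}_2, n)$, a convex hull appearing in the intersection defining $\varPsi(\mathcal{A}_1, n)$ that is contained in it. Concretely, I would set
$$\{A_i\}_{i\in\mathcal I} = \{\Conv(S) : S\in\mathcal S(\mathcal A_1,n)\}, \qquad \{B_j\}_{j\in\mathcal J} = \{\Conv(T) : T\in\mathcal S(\mathcal A_2,n)\},$$
so that $\varPsi(\mathcal A_1,n)=\bigcap_{i\in\mathcal I} A_i$ and $\varPsi(\mathcal A_2,n)=\bigcap_{j\in\mathcal J} B_j$ by Definition~\ref{def: intersection}. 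It then suffices to verify the hypothesis of Lemma~\ref{pro:subset}: for each $T\in\mathcal S(\mathcal A_2,n)$ there exists some $S\in\mathcal S(\mathcal A_1,n)$ with $\Conv(S)\subset\Conv(T)$.

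The key step is a counting argument producing such an $S$ as a subset of $T$ itself, since $S\subset T$ immediately yields $\Conv(S)\subset\Conv(T)$ by monotonicity of the convex hull. Fix $T\in\mathcal S(\mathcal A_2,n)$, so that $T\subset\mathcal A_2$ and $|T|=m_2-n$; equivalently $\mathcal A_2\setminus T$ has exactly $n$ elements. Because $\mathcal A_1\subset\mathcal A_2$, I would estimate
$$|\mathcal A_1\cap T| = m_1 - |\mathcal A_1\cap(\mathcal A_2\setminus T)| \geq m_1 - n,$$
the inequality following from $|\mathcal A_1\cap(\mathcal A_2\setminus T)|\leq|\mathcal A_2\setminus T|=n$. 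Since $n\leq m_1$ guarantees $m_1-n\geq 0$, I can select any subset $S\subset\mathcal A_1\cap T$ with exactly $m_1-n$ elements. By construction $S\subset\mathcal A_1$ with $|S|=m_1-n$, hence $S\in\mathcal S(\mathcal A_1,n)$, and $S\subset T$ gives $\Conv(S)\subset\Conv(T)$. Applying Lemma~\ref{pro:subset} then yields $\varPsi(\mathcal A_1,n)\subset\varPsi(\mathcal A_2,n)$, as desired.

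The only genuinely delicate point is the cardinality bookkeeping in the presence of duplicate elements: as flagged in the footnote to Definition~\ref{def: S}, $\mathcal A_1$ and $\mathcal A_2$ are multi-sets, and $\mathcal A_1\subset\mathcal A_2$ should be read as an inclusion of labelled collections of points. I would therefore carry out the counting on the underlying index sets (each point tagged by the agent it came from) rather than on the geometric images, so that identical points are never accidentally identified and the bound $|\mathcal A_1\cap(\mathcal A_2\setminus T)|\leq n$ stays exact. With this convention the displayed inequality is purely combinatorial and the remaining inclusions are immediate, so I do not anticipate any further obstruction.
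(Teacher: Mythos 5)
Your proof is correct and follows essentially the same route as the paper's: both use the counting bound $|\mathcal{A}_1\cap T|\geq m_1-n$ (the paper writes this as $S_2\supset \mathcal{A}_1\setminus(S_2^c\cap\mathcal{A}_1)$ with $|S_2^c\cap\mathcal{A}_1|\leq n$) to extract a member of $\mathcal{S}(\mathcal{A}_1,n)$ inside each member of $\mathcal{S}(\mathcal{A}_2,n)$, and then invoke Lemma~\ref{pro:subset}. If anything, you are slightly more careful than the paper, since you explicitly pass from set inclusion $S\subset T$ to $\Conv(S)\subset\Conv(T)$ before applying Lemma~\ref{pro:subset}, and you flag the multi-set bookkeeping, both of which the paper leaves implicit.
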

\begin{pf*}{Proof.}
We show that every set $S_2$ in $\mathcal S(\mathcal A_2,n)$ is a superset for some set $S_1$ in $\mathcal S(\mathcal A_1,n)$. To see this, notice that
\begin{align*}
	S_2 &= \mathcal A_2 \backslash S_2^c \supset \left(\mathcal A_2\backslash S_2^c\right)\cap \mathcal A_1\\
	&=\mathcal A_1 \backslash \left(S_2^c\cap\mathcal A_1\right),
\end{align*}
where $S_2^c = \mathcal A_2\backslash S_2$ is a set with cardinality $n$. Notice that $S_2^c\cap\mathcal A_1$ has cardinality no greater than $n$, which means that $\mathcal A_1 \backslash \left(S_2^c\cap\mathcal A_1\right)$ is a superset of some set in $\mathcal S(\mathcal A_1,n)$. The proof is thus finished by invoking Lemma~\ref{pro:subset}.
\hfill$\square$
\end{pf*}

We are ready to prove the first statement. From Algorithm~\ref{alg:resilient}, it is noticed that $\mathcal{Y}^i(p,k), \mathcal{Z}^i(p,k) \subset \mathcal{X}^i(k)$ holds for any $i\in\mathcal{B}$ and $p\in\mathcal{D}$. Recalling the definition of $S^i(k)$, namely \eqref{def:S_i}, it is not difficult to conclude that $y^i(p,k), z^i(p,k) \in \mathcal S^i(k),$ as a direct result of Lemma~\ref{lm: subset}. 

(2) Consider the scenario under either $F$-local or $F$-total attack. For a benign agent $i$, it has no less than $|\mathcal{X}^i(k)|-F$ benign in-neighbors. By Definitions \ref{def: S} and \ref{def: intersection}, one obtains that $\mathcal{S}^i(k)$ is included in the convex hull formed by any $|\mathcal{X}^i(k)|-F$ in-neighboring values. Hence, it is trivial to derive that $\mathcal{S}^i(k)$ is a subset of the convex hull formed by the benign in-neighbors' states, that is, $\mathcal{S}^i(k)\subset\Xi(k).$

(3) Combining the first two statements, it is not difficult to conclude that
\begin{equation}
	\Lambda^i(k) \subset \Xi(k),
\end{equation}
where $\Lambda^i(k)$ is defined in \eqref{eqn:Lambda^i}. In view of \eqref{eqn:auxiliary1}, one concludes that C3 holds.

\section{Proof of Lemma \ref{lmm:topo1}}
For proving Lemma \ref{lmm:topo1}, let us introduce the lemma below:
\begin{lemma}\label{lm:y^i}
	Let $\mathcal{A}$ be a set with $|\mathcal{A}|= (d+1)F+1$. The following relations hold for any linear function $l(x)$:
	\begin{enumerate}
		\item If there exist at least $dF+1$ points $\bar{x}$ in $\mathcal{A}$ such that $l(\bar{x})\leq m$, then for any point $y \in \varPsi(\mathcal{A},F)$, $l(y) \leq m$ holds;
		\item If there exist at least $dF+1$ points $\bar{x}$ in $\mathcal{A}$ such that $l(\bar{x})\geq M$, then for any point $z \in \varPsi(\mathcal{A},F)$, $l(z) \geq M$ holds.
	\end{enumerate}
\end{lemma}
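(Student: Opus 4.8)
The plan is to combine two simple facts: first, that a linear (indeed affine) function $l$ attains its maximum over a convex hull $\Conv(S)$ at one of the points generating $S$; and second, that the hypothesis leaves so few ``offending'' points that some subset appearing in the intersection defining $\varPsi(\mathcal A,n)$ can avoid all of them. I would prove statement 1) in full and then deduce statement 2) by applying 1) to $-l$ and $-M$ (using that $l(\bar x)\ge M \iff -l(\bar x)\le -M$, and that the minimum of $l$ over a convex hull is likewise attained at a generating point).

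For statement 1), I would first record the relevant cardinalities. Since $|\mathcal A|=(d+1)n+1$, every subset $S\in\mathcal S(\mathcal A,n)$ has cardinality $|\mathcal A|-n=dn+1$, and by Definition~\ref{def: intersection} the set $\varPsi(\mathcal A,n)$ is the intersection of the convex hulls $\Conv(S)$ taken over all such $S$. Let $\mathcal G=\{\bar x\in\mathcal A: l(\bar x)\le m\}$ be the ``good'' points; by hypothesis $|\mathcal G|\ge dn+1$, so the complementary ``bad'' set $\mathcal B=\mathcal A\setminus\mathcal G$ obeys $|\mathcal B|\le (d+1)n+1-(dn+1)=n$.

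The key step is to exhibit a single subset $S^*\in\mathcal S(\mathcal A,n)$ that contains no bad point. Forming an element of $\mathcal S(\mathcal A,n)$ amounts to deleting exactly $n$ points from $\mathcal A$, and since $|\mathcal B|\le n$ I can delete all of $\mathcal B$ together with $n-|\mathcal B|$ further points drawn from $\mathcal G$; the surviving set $S^*$ then has the required cardinality $dn+1$ and satisfies $S^*\subseteq\mathcal G$. Because $l$ is affine and $S^*$ is a finite point set, $\max_{x\in\Conv(S^*)}l(x)=\max_{s\in S^*}l(s)\le m$, i.e. $\Conv(S^*)\subseteq\{x:l(x)\le m\}$. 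Finally, as $S^*\in\mathcal S(\mathcal A,n)$, the definition of $\varPsi$ yields $\varPsi(\mathcal A,n)\subseteq\Conv(S^*)$, so every $y\in\varPsi(\mathcal A,n)$ satisfies $l(y)\le m$, which is exactly the claim.

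I do not expect a serious obstacle here; the argument is a counting observation fused with the vertex-maximization property of affine functions. The one point to state carefully is the existence of the bad-point-free subset $S^*$, which rests on the inequality $|\mathcal B|\le n$ matching precisely the number of deletions permitted by $\mathcal S(\mathcal A,n)$ --- this is where the threshold $|\mathcal A|=(d+1)n+1$ is used and cannot be weakened. Statement 2) then follows immediately by the symmetric argument, giving $l(z)\ge M$ for all $z\in\varPsi(\mathcal A,n)$.
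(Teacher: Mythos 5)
Your proposal is correct and follows essentially the same route as the paper's own proof: both identify a subset of $dn+1$ points satisfying the bound, observe that this subset is an element of $\mathcal{S}(\mathcal{A},n)$ so that $\varPsi(\mathcal{A},n)$ lies inside its convex hull, and use the fact that a linear function preserves the inequality over that convex hull. Your write-up merely makes the counting of ``bad'' points and the reduction of statement 2) to statement 1) (via $-l$ and $-M$) explicit, where the paper leaves these implicit.
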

\begin{pf*}{Proof.}
	By Corollary \ref{lm:existence}, $\varPsi (\mathcal{A}, n)\neq \varnothing$. We then show the rationale of the statements as follows:
	\begin{enumerate}
		\item Let us consider the convex hull, denoted by $\Psi$, formed by any $dF+1$ points $\bar{x}$ such that $l(\bar{x})\leq m$ and $\bar x \in\mathcal{A}$. We could infer from Definition \ref{def: intersection} that $\varPsi (\mathcal{A}, F)$ is a subset of the convex hull formed by any  $dF+1$ points in $\mathcal{A}$. Therefore, $\varPsi (\mathcal{A}, F)\subseteq \Psi$.
		Clearly, for any element $x$ in $\Psi$, it also follows that $l(x)\leq m$. Therefore, the first statement holds.
		\item The second statement is proved in a similar manner as above. \hfill$\square$
	\end{enumerate}
\end{pf*}

Given any $x\in\mathbb{R}^d$, let us consider the following function: $$l_\ell(x)= e_\ell^T x,$$ where $e_\ell$ is the $\ell$-th canonical basis vector in $\mathbb{R}^d$. Hence $l_\ell(x)$ returns the $\ell$-th entry of $x$. Then among all $l_\ell(x)$'s, where $x \in \mathcal{X}^i(k)$, we respectively denote by $\overline{m}^i_\ell(k)$ and $\overline{M}^i_\ell(k)$ the $(dF+1)$-th smallest and $(dF+1)$-th largest values.
From Lemma \ref{lm:y^i}, one immediately has the following results: 
\begin{equation}
	\begin{aligned}
		y^i_\ell(p,k) &\leq \overline{m}^i_\ell(k),\\
		z^i_\ell(p,k) &\geq \overline{M}^i_\ell(k),
	\end{aligned}
\end{equation}
where \revise{$y^i_\ell(p,k)$ [resp. $z^i_\ell(p,k)$] refers to the $\ell$-th entry of $y^i(p,k)$ [resp. $z^i(p,k)$]}. Namely, for any $\ell\in\mathcal{D}$, the $\ell$-th entry of $y^i(p,k)$ is upper bounded by $\overline{m}^i_\ell(k)$. Similarly, the $\ell$-th entry of $z^i(p,k)$ is lower bounded by $\overline{M}^i_\ell(k)$.

Given \eqref{eqn:Lambda^i}, for any $\ell\in\mathcal{D}$, we conclude that
\begin{equation}
	\begin{split}
		m_\ell(\Lambda^i(k)) &= \min_{p \in\mathcal{D}}\big(\min \{y_\ell^i(p,k), z_\ell^i(p,k)\}\big),\\
		M_\ell(\Lambda^i(k)) &= \max_{p \in\mathcal{D}}\big(\max\{y_\ell^i(p,k), z_\ell^i(p,k)\}\big),
	\end{split}
\end{equation}
where $m_\ell(\Lambda^i(k))$ and $M_\ell(\Lambda^i(k))$ are defined in Section~\ref{sec:notation}. Therefore, it holds that 
\begin{equation}\label{eqn:widehatbound}
	\begin{split}
		m_\ell(\Lambda^i(k)) & \leq \overline{m}^i_\ell(k),\\
		M_\ell(\Lambda^i(k)) &\geq \overline{M}^i_\ell(k).
	\end{split}
\end{equation}

With the preparations above, we finally provide the detailed proof of Lemma~\ref{lmm:topo1} for the case under statement 1), with which the case under 2) can be proved similarly.
	
To this end, consider any $\ell \in \mathcal{D}$ and any $\epsilon_1,\epsilon_2 \in \mathbb{R}$. Suppose that $\mathcal{B}_\ell^M(k,\epsilon_1)$ and $\mathcal{B}_\ell^m(k,\epsilon_2)$ are disjoint and nonempty. \revise{As the network is $((d+1)F+1)$-robust, there must exist one benign node, labeled as $j$,} in either $\mathcal{B}_\ell^M(k,\epsilon_1)$ or $\mathcal{B}_\ell^m(k,\epsilon_1)$ that has at least $(d+1)F+1$ in-neighbors outside its set.   
	
	Without loss of generality, let $j\in \mathcal{B}_\ell^M(k,\epsilon_1)$ be such an agent. It has no less than $(d+1)F+1$ neighbors in $\mathcal{V}\backslash\mathcal{B}_\ell^M(k,\epsilon_1)$. Moreover, under the $F$-local attack model, no less than $dF+1$ points in agent $j$'s in-neighborhood have the $\ell$-th entries upper bounded by $\epsilon_1$. Therefore, one has that $\overline{m}^j_\ell(k)\leq \epsilon_1$. Invoking \eqref{eqn:widehatbound}, $m_\ell(\Lambda^j(k))\leq \overline{m}^j_\ell(k)\leq \epsilon_1$. Hence, $\Lambda^j(k)$ must contain a point, the $\ell$-th entry of which is upper bounded by $\epsilon_1$. Similarly, if agent $j\in \mathcal{B}_\ell^m(k,\epsilon_2)$, $\Lambda^j(k)$ must contain some point with $\ell$-th entry lower bounded by $\epsilon_2$. The proof is thus completed.

\section{Proof of Lemma~\ref{lmm:equvi}}

	Notice that agent $i$ sets $\breve{x}^i(k)$ as $\bar{x}^i(k)$ if and only if $||x^j(k)-\bar{x}^i(k)||\leq \lambda(k)$ holds for any $ j\in\mathcal{N}_i^+$. As proved in Section~\ref{sec:alg}, $\cen(\Lambda^i(k))$ is a convex combination of these in-neighboring states. By using triangle inequality, it hence follows that
	\begin{align}
		\big| \big|\cen(\Lambda^i(k))-\bar{x}^i(k)\big| \big|\leq \lambda(k).
	\end{align}
	We thus rewrite \eqref{eqn:ave}--\eqref{eqn:resilient} as
	\begin{equation}
		\breve{x}^i(k) = \cen(\Lambda^i(k))+ \xi^i(k),
	\end{equation}
	where $||\xi^i(k)||\leq \lambda(k)$. Setting $\varepsilon^i(k)=(1-a^{ii})\xi^i(k)$ yields the conclusion.


\end{document}